\newtheorem{thm}[subsection]{Theorem}
\newtheorem{cor}[subsection]{Corollary}
\newtheorem{lem}[subsection]{Lemma}
\newtheorem{translem}[subsection]{Transversality Lemma}
\newtheorem{prop}[subsection]{Proposition}
\theoremstyle{definition}
\newtheorem{rems}[subsection]{Remarks}
\numberwithin{equation}{section}
\renewcommand{\H}{\mathcal H}
\newcommand{\RR}{\mathbb R}
\newcommand{\CC}{\mathbb C}
\renewcommand\mod[1]{\ (\mathop{\mathrm mod}#1)}
\newcommand{\ZZ}{\mathbb Z}
\newcommand{\TT}{\mathbb T}
\newcommand{\diag}{\mathop {\mathrm{diag}}}
\newcommand{\ind}{\mathrm{ind}}
\newcommand{\Hess}{\mathrm{Hess}}
\newcommand{\spt}{\mathrm{spt}}
\newcommand{\sslash}{\mathbin{/\mkern-6mu/}}
\begin{document}
	
	\title{Morse theory for discrete magnetic operators and nodal count distribution for graphs}
 \titlemark{Morse theory and nodal count}
%	\author{Lior Alon and Mark Goresky}
\emsauthor{1}{Lior Alon}{L. Alon}
\emsaffil{1}{Department of Mathematics, Massachusetts Institute of Technology, Cambridge MA 02139 \email{lioralon@mit.edu}}
\emsauthor{2}{Mark Goresky}{M. Goresky}
\emsaffil{2}{School of Mathematics, Institute for Advanced Study, Princeton NJ 08540 
\email{goresky@ias.edu}}

\classification{05C50, 05C22, 81Q35,
47A56, 47A10, 57R70, 58K05, 49J52, 14M15, 58A35, 57Z05}
\keywords{Magnetic operators, spectral graph theory, nodal count, Morse theory}

\begin{abstract}
Given a discrete Schrödinger operator $h$ on a finite connected graph $G$ of $n$ vertices,
the nodal count $\phi(h,k)$ denotes the number of edges on which the $k$-th eigenvector changes sign. A {\em signing} $h'$ of $h$ is any real symmetric matrix constructed by changing the sign of some off-diagonal entries of $h$, and its nodal count is defined according to the signing. The set of signings of $h$ lie in a naturally defined torus $\TT_h$ of ``magnetic perturbations" of $h$. G. Berkolaiko \cite{ Berkolaiko} discovered that every signing 
$h'$ of $h$ is a critical point  of every eigenvalue $\lambda_k:\TT_h \to \RR$, with Morse index 
equal to the nodal surplus. We add further Morse theoretic information to this result. 
We show if $h_{\alpha} \in \TT_h$ is a critical point of $\lambda_k$ and the eigenvector vanishes at a 
single vertex $v$ of degree $d$, then the critical point 
lies in a nondegenerate critical submanifold of dimension $d+n-4$, closely related to the configuration
space of a planar linkage.  We  compute its Morse index in terms of spectral data. 

%If $\lambda_k$ has multiplicity 2, we provide a sufficient condition under which a generic nearby torus $\TT_{h'}$ is transverse to the stratum of multiplicity 2 eigenvalues in the space of Hermitian $n\times n$ matrices.

 The {\em average nodal surplus distribution} %$P(\mathcal{S}(h))$
is the distribution of values of $\phi(h',k)-(k-1)$, averaged over all signings $h'$ of $h$. If all critical points correspond to simple eigenvalues with nowhere-vanishing eigenvectors, then the average nodal surplus distribution is binomial. In general, we conjecture that the nodal surplus distribution converges to a 
Gaussian in a CLT fashion as the
first Betti number of $G$ goes to infinity. 
\end{abstract}	

\maketitle
	
	\section{Introduction}
 
	In some ways this paper is both an analog of \cite{ABBuniversality} for discrete graphs and a continuation and expansion of the papers \cite{Berkolaiko,CdeV2}, although it is completely self-contained.

	\subsection{The Setting}\label{subsec-intro1}
	Let $G$ be  a simple graph on $n$ ordered vertices labeled $1, 2, \cdots, n$.  Write $r \sim s$ if $r\ne s$ are vertices connected by an edge.
	A (real or complex) {\em function} on $G$ is a function on the vertices of $G$, that is, a vector
	in $\RR^n$ or $\CC^n$ and we denote the value of such a function $v=(v_1,v_2,\cdots,v_n)$ by $v(r)$ or $v_r$.  
	An $n \times n$ matrix $h$ is {\em supported on $G$} if $h_{rs} \ne 0 \implies r\sim s \text{ or } r = s$.
	Let $\mathcal S(G)$ and $\mathcal H(G)$ denote the vector spaces of real symmetric matrices and complex Hermitian matrices supported on $G$. 
	A  {\em discrete Schr\"odinger operator} is a real symmetric matrix $h\in \mathcal S (G)$ with $h_{rs} < 0$ for $r \sim s$. 
	The quadratic form associated with $h\in \mathcal S(G)$ may be expressed as the quadratic form of $\Delta +V$, that is 
	\begin{equation}\label{quadratic form}
		\langle f, hf \rangle = -\sum_{r \sim s} h_{rs}\left(f(r) - f(s)\right)^2 + \sum_{r=1}^n V(r) f(r)^2    
	\end{equation}
	where the ``potential"  is $V(r) = h_{rr} + \sum_{r\sim s}h_{rs}$ and $\Delta$ is a {\em weighted Laplace operator} on $G$.

	A discrete Schr\"odinger operator $h$ has real eigenvalues $\lambda_1 \le \lambda_2 \le \cdots \le \lambda_n$.
	Suppose  $\lambda_k$ is a simple (multiplicity one) eigenvalue of $h$ with a nowhere-vanishing eigenvector $v$ (meaning that $v_r \ne 0$ for all $r$). 
	A basic problem in graph theory is to understand the behavior of the {\em nodal count}  $\phi(h,k)$, that is,
	the number of edges $r \sim s$ for which $v$ changes sign:  $v(r)v(s)< 0$.   It is known that
	\begin{equation}\label{nodal bound}
		k-1\le \phi(h,k)\le k-1+\beta,
	\end{equation}
	where $\beta$ is the first Betti number of $G$.  (See \cite{Davis} for a review of the many works leading to the upper bound, an analogue of
	Courant's theorem\footnote{The Courant theorem  states, for a domain
		$\Omega$ in Euclidean space with homogeneous boundary conditions, that the
		nodal set of the $k$-th eigenfunction of the Laplacian divides $\Omega$ into no more than $k$ subdomains, see \cite{Courant} Chapt. 6 \S 6.}, 
	and \cite{Berkolaiko3} for the lower bound.)
	This motivates the definition of the {\em nodal surplus} 
	\[\phi(h,k)-(k-1)\in \{0,1,\cdots,\beta\}\] and its probability distribution $P(h)=(P(h)_{0},\ldots,P(h)_{\beta})$ over the $n$ possible eigenvalues:
	\[P(h)_s = \frac{1}{n} \# \big\{1 \le k \le n\big|\ \phi(h,k)-(k-1) = s\big. \}.
	\]
	In numerical simulations for large graphs, this distribution seems to concentrate around $\frac{\beta}{2}$
	with variance of the order of $\beta$, similar to the observations for metric graphs in \cite{ABBuniversality}. 
	
	\subsection{Nodal count for signed graphs}
	If $h \in \mathcal S(G)$ is a discrete Schr\"odinger operator we may consider other {\em signings} $h' \in 
\mathcal S(G)$ obtained from $h$ by changing the sign of some collection of off-diagonal entries. Every
symmetric matrix $h' \in \mathcal S(G)$ is a signing of a uniquely determined Schr\"odinger operator $h$.
We may consider $h'$ to be an analog of the discrete Schr\"odinger operator on the corresponding {\em signed graph} $G'$ obtained from
$G$ by attaching signs to the edges, as originally introduced in \cite{Harary} and extensively 
studied, see \cite{bilu2006lifts, marcus2013interlacing, Zaslavsky}. In this case, taking the signing into account, the nodal count is
defined to be the number of edges $r \sim s$ such that $v(r)h'_{rs}v(s)>0$.

 Denote by $\mathcal S(h)$ the collection of all possible signings of $h$ (cf.~\S 2.6). The inequality (\ref{nodal bound}) continues to hold for any signing of $h$. 
The {\em average nodal surplus distribution} $P(\mathcal{S}(h))$ is the average of $P(h')$ over all signings $h'\in \mathcal{S}(h)$. In Theorem \ref{thm-index} we show that if the
diagonal entries of $h$ are all equal, then $P(\mathcal S(h))$ is symmetric around $\beta/2$.  Numerical experiments lead to the following
\paragraph{\textbf{Conjecture}}
		Given a simple connected graph $ G $ there is a generic set (open, dense and full measure) of $ h\in\mathcal{S}(G) $ for which the average nodal surplus distribution $P(\mathcal{S}(h))$ is symmetric around $ \beta/2 $ with variance $ \sigma_{h}^{2} $ of order $ \beta $. Moreover, the normalized distribution
		\[\rho_{G,h}:=\sum_{j=0}^{\beta}P(\mathcal{S}(h))_{j}\delta_{x_{j}}\quad\text{with}\quad x_{j}=\frac{j-\beta/2}{\sigma_{h}}, \]
		converges in the weak topology to the normal Gaussian distribution $ N(0,1) $ as $ \beta\to\infty $, uniformly over all simple connected $ G $ with first Betti number $ \beta $, and generic $ h\in\mathcal{S}(G) $.

\subsection{Gauge invariance}  The {\em gauge group} $\TT^{n}=(\RR/2\pi\ZZ)^n$ acts on the space $\mathcal H(G)$  where $(\theta_1,\theta_2,\allowbreak \cdots,\theta_n)$ acts by conjugation with $\diag(e^{i\theta_1}, e^{i\theta_2}, \cdots,
e^{i\theta_n})$.  This action preserves eigenvalues, nodal count, and most other graph properties that are studied in this paper.	Elements $h, h' \in \mathcal H(G)$ that differ by a gauge transformation are said to be {\em gauge equivalent}.  
If $h \in \mathcal S(G)$ is a discrete Schr\"odinger operator then the signings $h'$ of $h$ for which the corresponding signed 
graph $G'$ is {\em balanced} (see \cite{Harary}) are exactly those $h'$ 
that are gauge equivalent to $h$. 
	
\subsection{Magnetic operators and nodal count}
	In \cite{Berkolaiko, BerkolaikoWeyand} G. Berkolaiko suggested that one might better understand the nodal count by considering its variation under
	{\em magnetic perturbations of h}.  The discrete analog for the Schr\"odinger operator associated to
	a particle in a magnetic field appears in \cite{Harper1, Harper2}. See also \cite{Lieb}, \cite {CdeVMagnetic}, \cite[\S 2.1]{CdeV1} and \cite{CdeV2}.  It is quickly reviewed in Appendix \ref{sec-magnetic-Schrodinger}.
	
	Given a discrete Schr\"odinger operator $h\in \mathcal{S}(G)$, a magnetic potential $\alpha$ is a real anti-symmetric matrix supported on $G$ and the associated {\em magnetic Schr\"odinger operator} $h_{\alpha}\in\mathcal{H}(G)$ is the Hermitian matrix $(h_{\alpha})_{rs}=e^{i\alpha_{rs}}h_{rs}$.
	The manifold  ({\ref{eqn-TAh}}) of such magnetic perturbations, $\TT_h \subset \mathcal H_n$, is a 
	torus containing $h$, cf.~\S {\ref{subsec-star-action}} below.  Its quotient, see equation 
	(\ref{eqn-magnetic-manifold}), modulo gauge transformations, $\mathcal M_h$ is a torus 
	of dimension $\beta$.  In \cite{CdeV2} and \cite{Berkolaiko}, G. Berkolaiko and Y. Colin de Verdi\`ere discovered a remarkable fact:  for any real symmetric $h\in \mathcal{S}(G)$ with simple eigenvalue $\lambda_k$ and nowhere vanishing eigenvector, the nodal surplus $\phi(h,k) - (k-1)$ is
	equal to the Morse index of $\lambda_k$, interpreted as a Morse function on the manifold $\mathcal M_h$. 
	
\subsection{Morse theory for magnetic perturbations modulo gauge transformations}
	We wish to apply Morse theory to the function $\lambda_k:\mathcal H_n \to \RR$,
	restricted to the torus $\TT_h$ or its quotient $\mathcal M_h$.  In principle, Morse theory provides a
	prescription for building the homology of $\mathcal M_h$ from local data at the critical points of $\lambda_k$ together with some homological information as to how these local data fit together.  Since the homology of $\mathcal M_h$ is known, Morse theory should provide restrictions on the number and type of critical points of $\lambda_k$, and in turn, restrictions on the nodal surplus.

	There are several difficulties with this plan, the first being that $\lambda_k$ is continuous 
	but not smooth:  it is analytic on each stratum of a certain stratification
	of $\mathcal H_n$ (see \S {\ref{sec-stratification}}) \cite{Kato, Rellich}.  
	If $\lambda_k(h)$ is simple then $\lambda_k$ is analytic near $h$ and
	one may search for its critical points on $\TT_h$. The torus $\TT_h$ and its quotient $\mathcal M_h$
	are preserved under complex conjugation, and the function $\lambda_k$ is invariant under 
	complex conjugation.   The simplest critical points of $\lambda_k$ are the {\em symmetry points} 
	(\S {\ref{subsec-symmetry}}): 	the points $h' \in \TT_h$ (or $[h'] \in \mathcal M_h$)
	fixed by complex conjugation,  i.e. the real symmetric matrices in $\TT_h$.

 The set of symmetry points of $\TT_h$ is denoted $ \mathcal{S}(h) $.  
 If $h$ is real symmetric then $\mathcal S(h)$ consists precisely of the various signings of $h$.  
  Following \cite{BanBerWey}, we show: 
 \paragraph{Theorem \ref{thm-index}} 
 Each critical point 
	$h'\in \TT_h$ with simple eigenvalue $\lambda_k(h')$ and nowhere vanishing eigenvector is necessarily in the gauge equivalence class of a symmetry point. In other words, its image
	$[h'] \in \mathcal M_h$ is a symmetry point. 
 Suppose that for each $k$ ($1 \le k \le n$) each critical point $h_{\alpha}\in\TT_{h}$ of $\lambda_k$ has $\lambda_k(h_{\alpha})$ as a simple eigenvalue with nowhere vanishing eigenvector.
	Then the average nodal count distribution is a binomial distribution\footnote{see
 Example (\ref{example-binomial})} with mean $ \beta/2 $ and variance $ \beta/4 $.   
	Consequently, if the average nodal distribution is not binomial then there must 
	exist critical points (of some eigenvalue) that are not symmetry points.  
	
	We give a homological
	characterization of symmetry points:

\paragraph{Theorem {\ref{prop-integral}}} Let $h \in \mathcal S(G)$ and $\alpha \in \mathcal A(G)$
which we may identify as a 1-form on $G$.  Then $h_{\alpha}$ is gauge equivalent to a symmetry point
if and only if $\int_{\xi}\alpha \equiv 0 \mod \pi$ for all cycles $\xi$, i.e. chains  $\xi \in C_1(G,\ZZ)$ with $\partial \xi = 0$.

\subsection{Classification of critical gauge-equivalence classes}\label{subsec-exceptional}
	In general, the nodal surplus distribution $P(\mathcal{S}(h))$ depends on  Morse data from all critical points of $\lambda_k$ (for all $k$), whether or not they are symmetry points. 	Following Theorem
	{\ref{thm-index}}, there are two possible types of non-symmetry critical points $[h'] \in \mathcal M_h$ of $\lambda_k$:
	\begin{enumerate}
		\item {\em exceptional critical points}, for which $\lambda_k(h')$ is simple but its eigenvector vanishes on one or more vertices.  In this case $ [h'] $ is (usually) a degenerate critical point 
  (see Theorem {\ref{thm-linkage}}):  it is contained in a larger critical submanifold. 
		\item {\em incorrigible critical points}, for which the multiplicity
		of $\lambda_k(h')$ is greater than one. In this case, $\lambda_k$ fails to be smooth and one must replace the usual Morse theory with stratified Morse theory (\cite{GM}).
	\end{enumerate}  
	Concerning the first case, suppose the eigenvector $v$ vanishes only at a single vertex $v_0$ 
	of the graph $G$. Suppose that $v_0$ has degree $\deg(v_{0})$. 
	
	\paragraph*{Theorem {\ref{thm-linkage}}} Assuming the critical point $[h'] \in \mathcal M_h$ is sufficiently
 generic\footnote{Specific conditions on $h'$ are given in Theorem {\ref{thm-linkage}} in \S \ref{subsec-critical manifold} } then it lies in a 
 nondegenerate (Morse-Bott) critical submanifold of $\mathcal M_h$, of dimension $\deg(v_{0})-3$, which 
	is diffeomorphic to the	configuration space of a particular planar linkage.  Its Morse index
	may be expressed in terms of spectral data.
	
	The configuration spaces of planar linkages are fascinating objects.  They have been extensively 
	studied and their homology is completely known,  cf.~\cite{Farber,Millson,Walker}.

	\medskip\noindent
	For the second case, when the multiplicity of $\lambda_k(h')$ is greater than one,
	G. Berkolaiko and I. Zelenko \cite{Berkolaiko2} have determined the {\em normal Morse data} for $\lambda_k$,
	and its Betti numbers, which forms the central ingredient required
	for stratified Morse theory.  However, in order to apply stratified Morse theory to the mapping $\lambda_k:\TT_h \to \RR$
	it is required that the manifold $\TT_h\subset \mathcal H_n$ should be Whitney stratified.  Its stratification comes by intersecting with the natural
	stratification of $\mathcal H_n$ (cf.~\S {\ref{sec-stratification}}), but this requires that $\TT_h$  
	should be transverse to the strata of the stratification of $\mathcal H_n$.  The challenge is to guarantee transversality of 
	the torus $\TT_h$ by a generic choice of the single element $h \in \mathcal S(G)$.
	The transversality lemma in \cite{Berkolaiko2} does not
	address this situation.  The first nontrivial case concerns the stratum $S_2(k)$
	where $\lambda_k$ has multiplicity 2.  Suppose $h_{\alpha}$ is a critical point of $\lambda_k$,
	an eigenvalue of multiplicity 2.  In \S \ref{subsec-graph-conditions} we define the notion of a 
	splitting of the graph $G$ by the eigenspace of $\lambda_k$.  	(A related condition
	was considered by L. Lov\`asz in \cite[\S 10.5.2]{Lovasz}.)
	
	\paragraph{Theorem \ref{prop-transverse}}  If the eigenspace of $\lambda_k(h_{\alpha})$ does not split 
the graph $G$ then the space $\mathcal H(G)$ is transverse to $S_2(k)$ at given point $h_{\alpha}\in \TT_h$.

\paragraph{Corollary \ref{cor-splitting}}  As above, if the eigenspace of $\lambda_k(h_{\alpha})$ does
not split $G$ then for generic choice $h' \in \mathcal S(G)$ the torus $\TT_{h'}$ is transverse to
the stratum $S_2(k)$ near $h_{\alpha}$.

\begin{ack} The authors would like to thank Gregory Berkolaiko for enlightening discussions and for his comments on earlier versions of this paper. The first author would like to thank Nikhil Srivastava and Theo McKenzie for useful discussions. The idea of Theorem \ref{thm-index} (\ref{item-average}) originated in joint discussions with Ram Band and Gregory Berkolaiko regarding metric graphs. The authors are very grateful to an anonymous referee for 
many thoughtful comments, which have considerably improved the paper.  
 The first author would like to thank the Institute for Advanced Study, as this work began when he was a member there. \end{ack}
\begin{funding}
The first author was supported by the Ambrose Monell Foundation and the Simons Foundation Grant 601948, DJ.
\end{funding}

	\section{Notation and definitions}\label{subsec-notation}  
	\subsection{Symmetric and Hermitian forms}\label{subsec-Hermitian}
	Let $\mathcal S_n$ denote the vector space of $n \times n$ real symmetric matrices, $\mathcal A_n$ the space
	of $n \times n$ real antisymmetric matrices and $\mathcal H_n$ the space of $n \times n$ Hermitian matrices, that is, matrices of linear operators on $\CC^n$ expressed in the standard basis and that are self-adjoint with
	respect to the standard Hermitian form $\langle x, y \rangle = \sum\bar{x_i}y_i$. 
	
	If $V \subset \CC^n$ is a complex subspace then the standard Hermitian form restricts to a 
	Hermitian form on $V$ and we denote by $\mathcal H(V)$ the self adjoint linear operators 
	$V \to V$.  If $\xi \in \mathcal H_n$ then it may fail to preserve $V$ however its ``restriction" to $V$
	may be defined by expressing $\xi = \left( \begin{smallmatrix} A & B \\ B* & D \end{smallmatrix} \right)$ with respect to the decomposition
	$\CC^n = V \oplus V^{\perp}$.  The restriction $\xi|V$ is defined to be the  operator $A\in \mathcal H(V)$.  Equvalently, $\xi|V$ is the operator corresponding to the
	restriction to $x, y \in V$ of the {\em sesquilinear} form $(x,y)_{\xi} = \langle x, \xi y \rangle$.

	\subsection{Laplace and Schr\"odinger operators}
	\label{subsec-support}  Throughout this section we fix a graph $ G=G([n],E) $.
	The natural ordering on the set of vertices $ [n]:=\{1,2,\ldots,n\} $ determines an orientation for each edge.
	Write $r \sim s$ if $r \ne s$ and vertices $r, s$
	are joined by an edge.  Write $r \simeq s$ if $r \sim s$ or $r=s$.

	A (real or complex) {\em  matrix supported on $G$} is an $n \times n$  matrix $h$ such that $h_{rs} \ne 0 \implies r \simeq s$. Such a matrix is
 {\em properly supported} on $G$ if, in addition, $r \sim s \implies h_{rs} \ne 0$.
	Symmetric, antisymmetric and Hermitian matrices supported on $G$ are denoted $\mathcal S(G), \mathcal A(G), \mathcal H(G)$ respectively.
	Examples of matrices in $\mathcal S(G)$ include the
	{\em adjacency matrix} for $G$, (weighted) {\em Laplace operators} for $G$ and discrete Schr\"odinger operators,
	see \S \ref{subsec-intro1} above. More generally, any matrix $h \in \mathcal H(G)$  may be
 considered a {\em magnetic  Schr\"odinger operator} for $G$, 
	(see \S {\ref{subsec-star-action}} below and references \cite{CdeV1,CdeV2}).
	
	\subsection{Graph homology}  
	The space $C_0(G;\ZZ)\cong \ZZ^n$ of 0-chains is the vector space of formal linear combinations of vertices, $\sum_{r=1}^n c_r[r]$.
	Each edge $rs$ with $r<s$ is orientated from $r$ to $s$ so that the group $C_1(G;\ZZ)$ of $1$-chains is the
	group of formal linear combinations 
	\begin{equation}\label{eqn-xi}
		\xi = \sum_{\substack{{r \sim s}\\{r<s}}}\xi_{rs}[rs] \ \text{ and }\ \xi_{rs} \in \ZZ.\end{equation}
	Then $H_1(G;\ZZ) = \ker(\partial)$ where $\partial:C_1(G;\ZZ) \to C_0(G;\ZZ)$ with $\partial[rs] = [s]-[r]$.
	The first Betti number is
	\[\beta = {\mathrm {rank}} H_1(G,\ZZ) =|E|-n+c\] 
	where $c$ is the number of connected components of $G$.
	
	The vector space $\RR^n$  may be viewed as the space of real-valued functions $\Omega^0(G)$ on the vertices of $G$.  If
	$v = (v_1,v_2,\cdots, v_n)$ we sometimes write $v_r = v(r)$. 
	The vector space $\mathcal A(G)$ of real, antisymmetric matrices supported on $G$ may be viewed as the
	space of $1$-forms $\Omega^1(G)$ on $G$ with coboundary differential
	\begin{equation}\label{eqn-differential}
		d: \Omega^0(G) = \RR^n \to \Omega^1(G) = \mathcal A(G);\quad (df)_{rs} = 
		\begin{cases} f(s) - f(r) &\text{if } r \sim s\\ 0 &\text{otherwise.}\end{cases}\end{equation}
	There are no 2-forms on a graph so $H^1(G;\RR) = \Omega^1(G)/d\Omega^0(G)$ is
	canonically dual to the homology $H_1(G;\RR)$ under the
	the natural pairing that is determined by integration $\Omega^1(G) \times C_1(G;\RR)\to \RR$.  If $\alpha\in\mathcal A(G)=\Omega^1(G)$ and $\xi\in C_1(G;\RR)$ as in (\ref{eqn-xi}), then
	\[ \int_{\xi}\alpha  = \sum_{\substack{{r \sim s}\\{r<s}}} {\xi_{rs}\alpha_{rs}}\]

	\subsection{Action of \texorpdfstring{$\mathcal A_n$}{An}}\label{subsec-star-action}
	The vector space $\mathcal A_n=\mathcal A_n(\RR)$ of $n \times n$ real antisymmetric matrices acts on the vector space
	$\mathcal H_n$ of $n \times n$ Hermitian matrices by
	\[ (\alpha*h)_{rs} = e^{i\alpha_{rs}}h_{rs}\]
	for all $\alpha \in \mathcal A_n(\RR)$ and $h \in \mathcal H_n$ with $(x+y)*h = x*(y*h)$ and with $0*h = h$. 
	Then $\mathcal A(G)$ acts on $\mathcal H(G)$.

	If $h$ is the discrete Schr\"odinger operator then $\alpha *h$ may be
	interpreted as the corresponding magnetic Schro\"odinger operator in the presence of a {\em magnetic field} 
	described by $\alpha$, whose flux through a cycle $\xi$ is $\int_{\xi}\alpha$, with a sesquilinear form 
	\begin{equation}\label{sesquilinear form}
		\langle f, (\alpha*h)f \rangle = -\sum_{r \sim s} h_{rs}\left|f(s) - e^{i\alpha_{rs}}f(r)\right|^2 + \sum_{r=1}^n V(r) \left|f(r)\right|^2,  
	\end{equation}
	instead of the quadratic form of $h$ in \eqref{quadratic form}.
	If $h = (h_{rs}) \in \mathcal H_n$ define $|h| \in \mathcal S_n$ by $|h|_{rs} = |h_{rs}|$ for $r \ne s$ and $|h|_{rr} = h_{rr}$ (diagonal entries of $|h|$ can be negative).
	Then there exists $\alpha \in \mathcal A_n$ so that $h = \alpha*(|h|)$.   
	
	\subsection{Gauge invariance}\label{subsec-gauge}
	The $*$ action factors through the torus $ \mathcal A_n(\RR)/\mathcal A_n(2 \pi \ZZ)$.
	So the subtorus {\em supported on $G$} 
	\[  
	\TT(G):=  \left\{\alpha\in\mathcal A_n(\RR)/\mathcal A_n(2 \pi \ZZ)\left| \ \alpha_{rs}\ne 0\right. \implies r\sim s\right\},
	\]
	acts on $\mathcal H(G)$ by the $*$ action.
	The differential (\ref{eqn-differential}) also factors 
 \begin{equation}\label{eqn-diag-d}
 \begin{diagram}[size=2em]
\Omega^0(G) = \RR^n & \rTo^{d} & \Omega^1(G) = \mathcal A(G)\\
\dTo^{\mod {2\pi}} && \dTo_{\mod{2\pi}} \\
\TT^n  & \rTo^{d} & \TT(G)
 \end{diagram}\end{equation}
through the {\em gauge group} $\TT^n = \RR^n/(2\pi\ZZ)^n$.
	{\em Gauge invariance} is the statement that the * action by coboundaries is simply given by conjugation:
	for any $\theta \in \TT^{n}$ and any $h \in \mathcal H_n$, direct calculation gives
	\begin{equation}\label{eqn-gauge-transformation} d\theta *h = e^{i \theta} h e^{-i\theta}\end{equation}
	where  $\theta=(\theta_1,\theta_2,\cdots,\theta_n) \in \TT^n$ and $ e^{i\theta} = {\mathrm {diag}}(e^{i\theta_1}, e^{i\theta_2}, \cdots, e^{i\theta_n})$.
	The * action by $d\theta$ preserves eigenvalues and preserves eigenvectors {\em up to phase}:  if $E_{\lambda}(h) = \ker(h - \lambda I)$ then
	\begin{equation}\label{eqn-phase}  E_{\lambda}(d\theta *h) = e^{i\theta}E_{\lambda}(h).
	\end{equation}
 Elements $h, h' \in \mathcal H(G)$ that differ by a gauge transformation 
 ($h' = d\theta *h$) are said to be {\em gauge equivalent}.
Gague equivalence determines an identification (cf. (\ref{eqn-figT(G)}) of the 
 {\em quotient torus}  (the manifold of magnetic fields modulo gauge transformations) with cohomology:
	\[\TT^{\mathcal A/d}(G) := \TT(G)/d(\TT^n) \cong H^1(G;\RR/2\pi \ZZ).\]
	
	\subsection{The embedded torus and its symmetry points}\label{subsec-symmetry}
	Recall (\ref{subsec-support}) that a matrix $h \in \mathcal H(G)$ is {\em properly supported} on $G$ if $h_{rs} \ne 0$ whenever $r \sim s$. (Diagonal entries
 $h_{rr}$ may vanish.) Such $h $ defines a mapping $\TT(G) \to \mathcal H_n$ by $\alpha \mapsto \alpha *h$, whose image is an embedding of $\TT(G)$ into $\mathcal H(G)$, 
	\begin{equation}\label{eqn-TAh}
		\TT_h := \TT(G)*h = \left\{ \alpha * h \ :\ \alpha \in \TT(G) \right\}
		= \left\{ \alpha * |h|\ :\ \alpha \in \TT(G) \right\}.\end{equation}
	We refer to $\TT_h$ as {\em the embedded torus}. For $h \in \mathcal H(G)$ which is not properly supported on $G$, the dimension of the embedded torus $\TT_h$ is the number of nonzero elements $h_{rs}$ with $r<s$. The embedded torus is invariant 
	under complex conjugation and we refer to the set of its fixed points (i.e. the real points)
	\[\mathcal{S}(h)\ :=\ \TT_h\cap\mathcal{S}(G)\ =\ \{\alpha*|h|\ :\  \alpha\equiv 0 \mod \pi\}.\]
	as {\em symmetry points}.  If $h \in \mathcal S_n$ then its symmetry points $\mathcal{S}(h)$ consist of symmetric matrices $h'$ 
	obtained from $h$ by changing the signs in any subset of off-diagonal entries $h_{rs}$ or equivalently
	\begin{equation}  \label{eqn-mod-pi} h' = \alpha *h \text{ where } \alpha \equiv 0 \mod \pi.\end{equation}
	
	The action of the {\em integral gauge group} $(\pi \ZZ)^n \subset \RR^n$
	preserves the set of symmetry points and changes the signs of the components of the corresponding eigenvectors.
	The set $\mathcal S(h)$ decomposes into a
	union of orbits under the integral gauge group. If $h$ is properly supported on $G$ ($h_{rs} \ne 0$ whenever $r \sim s$) then $\mathcal S(h)$ has $2^{|E|}$ elements,
	partitioned into $2^{\beta}$ orbits (cf.~\S \ref{subsec-magnetic-perturbations}). Each orbit corresponds to a
choice of parity of the circulations around a choice of elementary cycles. 
	
	\begin{thm}\label{prop-integral}
		Suppose $h\in \mathcal H(G)$ is properly supported on $G$.
  Let $\alpha \in \mathcal A(G)=\Omega^1(G)$ 
so that $h = \alpha * |h|$. Then $h$ is gauge-equivalent to a symmetry point $h'\in\mathcal S(h)$ if and only if
		\begin{equation}\label{eqn-modpi} \int_{\xi}\alpha \equiv 0 \mod \pi\end{equation}
		for all cycles $\xi$, i.e. chains  $\xi \in C_1(G,\ZZ)$ with $\partial \xi = 0$.
	\end{thm}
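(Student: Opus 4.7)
My plan is to translate the condition ``gauge-equivalent to a symmetry point'' into an algebraic condition on $\alpha$, and then recognise that condition as an integrality statement on the periods via the integration pairing $H^1(G;\RR)\times H_1(G;\ZZ)\to\RR$.

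\textbf{Step 1.} A symmetry point of $\TT_h$ has the form $h'=\beta*|h|$ with $\beta\in\pi\mathcal A(G,\ZZ)$. A gauge equivalence $h=d\theta*h'$ rewrites via the $*$-action as $h=(d\theta+\beta)*|h|$. Because $h$ is properly supported, the parametrisation $\TT(G)\to\TT_h$ of \eqref{eqn-TAh} is bijective, so matching $\alpha*|h|=(d\theta+\beta)*|h|$ forces $\alpha\equiv d\theta+\beta\pmod{2\pi\mathcal A(G,\ZZ)}$. Absorbing $2\pi\mathcal A(G,\ZZ)$ into the coarser lattice $\pi\mathcal A(G,\ZZ)$, this is equivalent to
\[(\star)\qquad \alpha-d\theta\in\pi\mathcal A(G,\ZZ)\ \text{ for some }\ \theta\in\RR^n.\]
Conversely, given $(\star)$, setting $\beta:=\alpha-d\theta$ produces the symmetry point $h':=\beta*|h|$ with $d\theta*h'=h$.

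\textbf{Step 2.} Interpret $(\star)$ cohomologically. Because $G$ is $1$-dimensional, every antisymmetric matrix is a cocycle, $H_1(G;\ZZ)$ is free abelian of rank $\beta$, and integration identifies $H^1(G;\RR)\cong\mathrm{Hom}(H_1(G;\ZZ),\RR)$ and $H^1(G;\ZZ)\cong\mathrm{Hom}(H_1(G;\ZZ),\ZZ)$, with the natural inclusion between them. Under the quotient $\mathcal A(G)\to\mathcal A(G)/d\RR^n=H^1(G;\RR)$, the image of $\pi\mathcal A(G,\ZZ)$ is precisely $\pi H^1(G;\ZZ)$. Hence $(\star)$ holds if and only if $[\alpha]\in\pi H^1(G;\ZZ)$, which by the pairing is equivalent to $\int_\xi\alpha\in\pi\ZZ$ for every $\xi\in C_1(G,\ZZ)$ with $\partial\xi=0$.

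The forward half of Step~2 is immediate from Stokes, since on a cycle $\int_\xi\alpha=\int_\xi(\alpha-d\theta)$ and the right-hand side is an integer combination of entries of $\alpha-d\theta\in\pi\mathcal A(G,\ZZ)$. I expect the only real work to be in the converse, producing $\theta$ explicitly: fix a spanning tree $T\subset G$, define $\eta\in\mathcal A(G,\ZZ)$ to vanish on $T$-edges and to equal $\tfrac1\pi\int_{\xi_e}\alpha$ on each non-tree edge $e$ (with $\xi_e$ the fundamental cycle of $e$), and integrate the then-exact form $\alpha-\pi\eta$ along paths in $T$ from a fixed basepoint to obtain $\theta$. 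The hypothesis forces $\eta$ to be $\ZZ$-valued on non-tree edges and guarantees that $\theta$ is well-defined. Beyond this bookkeeping, I anticipate no serious obstacle.
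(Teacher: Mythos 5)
Your proposal is correct and follows essentially the same route as the paper: both reduce ``gauge-equivalent to a symmetry point'' to the condition $\alpha - d\theta \in \pi\mathcal{A}(G,\ZZ)$ for some $\theta$, and then invoke the duality $H^1(G;\RR)\cong\mathrm{Hom}(H_1(G;\ZZ),\RR)$ (equivalently, $[\alpha]\in \pi H^1(G;\ZZ)$ iff all periods lie in $\pi\ZZ$). Your closing spanning-tree sketch is just an explicit realization of the same cohomological step, not a different argument.
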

	
	\begin{proof} Since $h$ is properly supported, the element $\alpha$ is uniquely determined modulo $2\pi\ZZ$.
 If $h$ is a symmetry point then $\alpha \equiv 0 \mod \pi$ so (\ref{eqn-modpi}) holds.
		If $h$ changes by gauge-equivalence, the integral (\ref{eqn-modpi}) is unchanged, by Stokes' theorem.

On the other hand, if (\ref{eqn-modpi}) holds for all cycles then by duality the cohomology class $[\alpha] $
		vanishes in $H^1(G;\RR)/H^1(G;\pi\ZZ)$ so it lies in  $H^1(G; \pi \ZZ)\subset H^1(G;\RR)$ and it comes from a 1-form
		$\alpha' \in \Omega^1(\pi\ZZ)$, that is, an antisymmetric matrix whose entries are multiples of $\pi$.  
	Then the cohomology class $[\alpha' - \alpha] \in H^1(G;\RR)$ vanishes so there exists $\theta \in \Omega^0(G;\RR)$ with
$\alpha' = \alpha + d\theta$.  This proves that
the symmetry point $\alpha'*|h|$ is gauge-equivalent to $h=\alpha*|h|$
	\end{proof}

	\subsection{Eigenvalues as Morse functions} \label{subsec-magnetic-perturbations}
	Eigenvalues of elements $h \in \H_n$ are real and ordered, say
	\[ \lambda_1(h) \le \lambda_2(h) \le \cdots \le \lambda_n(h).\]
	For each $k$ ($1 \le k \le n$) the mapping $\lambda_k:\H_n \to \RR$ is well defined,
	continuous and piecewise real-analytic:  there is a stratification 
  of $\H_n$ by analytic subvarieties such that
	the restriction of $\lambda_k$ to each stratum is analytic  
 (cf. \S \ref{subsec-SmRm} and Lemma \ref{lem-Smk}).
 
	The restriction of each $\lambda_k$ to the embedded torus $\TT_{h}$ is invariant under gauge transformations so it determines a function on the quotient,
	\begin{equation}\label{eqn-magnetic-manifold}
		\mathcal M_h = \TT_{h}\sslash\TT^n,\end{equation} 
   where we use the notation $\sslash \TT^n$ to denote dividing by gauge equivalence.
  The torus $\mathcal M_h$ has dimension $\beta$, and
	is referred to in \cite{CdeV2} as the {\em manifold of magnetic perturbations modulo gauge transformations}.

 \begin{equation}\label{eqn-figT(G)}
	\begin{diagram}[size=2.5em]
		\TT(G) & \rTo^{*h} & \TT_h & \rInto & \mathcal H_n & \rTo^{\lambda_k} & \RR&\\
		\dTo^{d(\TT^n)} && \dTo^{\sslash \TT^n} & && \ruTo(4,2)_{\lambda_k} &&\\
		\TT^{\mathcal A/d}(G) & \rTo & \mathcal M_h &&&&&&
	\end{diagram}\end{equation}
 If $\alpha \in \TT(G)$ then the equivalence class of $\alpha *h$ in $\mathcal M_h$ is denoted $[\alpha *h]$ or $[h_{\alpha}]$.
	
	If $\theta \in \RR^n$ then $\overline{(d\theta)*h} = d(-\theta)*\bar h$ so complex conjugation passes to an involution on $\mathcal M_h$.
	{\em Every fixed point of this involution comes from a symmetry point in $\TT_h$}: for if $h \in \mathcal H_n$ and $[h]\in \mathcal M_h$ is fixed, this means
	$\bar h = (d\theta) *h$ for some $\theta \in \RR^n$, so $(d\frac{\theta}{2})*h$ is a symmetry point.  
	It is therefore reasonable to refer to these fixed points of $\mathcal M_h$ as {\em symmetry points of} $\mathcal M_h$.
	
	\begin{lem}\label{lem-symmetry-count} Let $G$ be a simple graph with $c$ connected components and let $h\in\mathcal H_{n}(G)$, properly supported on $G$. 
		Then, each symmetry point $h' \in \mathcal S(h)$ has exactly $2^{n-c}$ gauge-equivalent symmetry points. 
		Thus, the number of symmetry points in $\mathcal M_h$ is $2^{|E|-(n-c)}=2^{\beta}$.
	\end{lem}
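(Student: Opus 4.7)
The plan is to count the intersection of the gauge orbit of a given symmetry point $h'\in\mathcal S(h)$ with the set $\mathcal S(h)$ itself, then conclude the count of orbits by dividing the total number of symmetry points by this orbit size. Since $h$ is properly supported, the total number of symmetry points $\mathcal S(h)=\{\alpha*|h|:\alpha\equiv 0\bmod\pi\}$ is $2^{|E|}$, one independent choice of sign $\alpha_{rs}\in\{0,\pi\}\pmod{2\pi}$ for each edge $r<s$.

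To compute the orbit of $h'$ under the gauge group $\TT^{n}$, I would first determine for which $\theta\in\TT^{n}$ the element $d\theta*h'$ is again a symmetry point. Using $(d\theta*h')_{rs}=e^{i(\theta_{s}-\theta_{r})}h'_{rs}$ together with the fact that $h'_{rs}\in\RR\setminus\{0\}$ for every $r\sim s$ (proper support), this forces $\theta_{s}-\theta_{r}\equiv 0\bmod\pi$ for every edge. The set $K\subset\TT^{n}$ of such $\theta$ is a Lie subgroup whose structure I would describe by picking a spanning tree in each connected component: fixing $\theta$ arbitrarily in $\TT=\RR/2\pi\ZZ$ at one vertex per component determines $\theta$ at the remaining vertices in that component up to an independent choice in $\pi\ZZ/2\pi\ZZ\cong\ZZ/2\ZZ$ at each edge of the tree. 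Hence $K$ has $c$ continuous $\TT$-parameters and $\sum_i(n_i-1)=n-c$ discrete $\ZZ/2\ZZ$-parameters.

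Next I would identify the kernel of the $*$-action at $h'$, namely those $\theta$ with $(d\theta*h')=h'$. Again using proper support, this requires $\theta_{s}-\theta_{r}\equiv 0\bmod 2\pi$ for all $r\sim s$, i.e. $\theta$ is constant modulo $2\pi$ on each connected component; this kernel is the diagonal subtorus $\TT^{c}\subset K$. The number of distinct symmetry points in the gauge orbit of $h'$ is therefore $|K/\TT^{c}|=2^{n-c}$, which proves the first assertion.

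Finally, since every gauge orbit of a symmetry point has the same cardinality $2^{n-c}$ by the first assertion, the number of orbits, i.e.\ the number of symmetry points in $\mathcal M_{h}$, is
\[
\frac{2^{|E|}}{2^{n-c}}=2^{|E|-n+c}=2^{\beta}.
\]
The only genuinely careful step is the orbit-stabilizer bookkeeping in the second paragraph: one must be sure to separate the $\TT^{c}$ part of $K$ (continuous, and acting trivially) from the $(\ZZ/2\ZZ)^{n-c}$ part (discrete, and acting freely on the orbit), so that the quotient really does yield an integer $2^{n-c}$ rather than a torus. A clean way to organize this is via the short exact sequence $0\to\TT^{c}\to K\to(\ZZ/2\ZZ)^{n-c}\to 0$ induced by the differential $d:\TT^{n}\to\TT(G)$ restricted to $K$.
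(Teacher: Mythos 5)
Your proof is correct and follows essentially the same orbit--stabilizer argument as the paper: both hinge on the observation that the gauge transformations carrying one symmetry point to another are exactly those $\theta$ with $\theta_s-\theta_r\equiv 0\bmod\pi$ on every edge, and that the stabilizer (constants on components, giving $\TT^c$) must be divided out. The only cosmetic difference is that the paper normalizes immediately to $f\in\{0,\pi\}^n$ and quotients by the $2^c$ choices in $\ker d$, whereas you carry the full Lie group $K\cong\TT^c\times(\ZZ/2\ZZ)^{n-c}$ and quotient by its connected component $\TT^c$; the count $2^{n-c}$ is identical.
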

	\begin{proof}
		It is enough to consider the case of real symmetric $h \in \mathcal S_n$, in which case its gauge-equivalent symmetry points are
		\[ [h]\cap \mathcal S(h) = \left\{ df * h \big| \ f(r)\in \{0,\pi\} \text{ for all } r \right\}. \]
		There are $2^n$ choices for $f$ among which $2^{c}$ are in the kernel of $d$ (those which are constant on connected components of $G$).
  So there are $2^{n-c}$ distinct values for $df$, and therefore $2^{n-c}$ distinct values of $df*h$ since $h$ is properly supported on $G$.  
		Hence, $[h]$ contains exactly $2^{n-c}$ gauge-equivalent symmetry points. Repeating this argument for any other $h'\in S(h)$ leaves $2^{|E|-(n-c)} = 2^{\beta}$ equivalence classes of symmetry points in $\mathcal M_h$.  
	\end{proof}

	\subsection{Nodal surplus}\label{subsec-nodal-surplus}
	Generalizing the notions described in the introduction, let $h$ be a Hermitian matrix supported on $G$, suppose
	$\lambda_k(h)$ is a simple eigenvalue with nowhere vanishing eigenvector $v = (v_1,v_2,\cdots,v_n)$.  
	Further assume that $\bar{v}_rh_{rs}v_s\in\RR$ for all $r\sim s$ (which is equivalent to $h$ being a critical point of $\lambda_k$, see 
	Theorem \ref{thm-index} part (\ref{item-more-generally})).    
	Define the {\em nodal count} $\phi(h,k)$  to be the number of edges $r\sim s$ such that
	\begin{equation}\label{eqn-nodal-count} \bar{v}_rh_{rs}v_s> 0.\end{equation}
	The {\em nodal surplus} is the number $\phi(h,k)-(k-1)$.  This number does not change under gauge transformation and
	it is known (see Theorem \ref{thm-index} below) that the nodal surplus is between $0$ and $\beta$, the first Betti number of $G$.
	The {\em nodal surplus distribution} $P(h)=\left(P(h)_{0},P(h)_{1},\ldots,P(h)_{\beta}\right)$ is the vector representing the probability distribution of these numbers over the $n$ possible eigenvalues:
	\[P(h)_s = \frac{1}{n} \# \big\{1 \le k \le n\big|\ \phi(h,k)-(k-1) = s\big\}.
	\]
	Assuming that $h \in \mathcal{S}_n$ and all its signings $h' \in \mathcal S(h)$ have all eigenvalues simple with nowhere-vanishing eigenvectors, the
	distribution can be averaged over signings to give the {\em average nodal distribution}
	\[P(\mathcal S(h)) = 2^{-|E|}\sum_{h' \in \mathcal S(h)}P(h').\]

	\section{Morse theory}
	\subsection{Critical points}
	Throughout this section we fix a graph $G$ with vertices $ 1, \cdots, n$ and edges $r \sim s$.
	Let $h \in \mathcal S(G)$ be a real symmetric matrix properly supported on $G$, cf.~ \S \ref{subsec-support}.
	For $\alpha \in \mathcal A(G)$ denote by $h_{\alpha} = \alpha*h$ the magnetic perturbation of $h$.
	Fix $k$ and write $\lambda_k(\alpha) = \lambda_k(h_{\alpha})$ for the k-th eigenvalue.  
	Let $\mathcal M_h$ be the manifold (\ref{eqn-magnetic-manifold}) of magnetic perturbations of $h$ modulo gauge transformations. It is a torus of dimension $\beta$, the first
	Betti number of the graphs $G$. By equation (\ref{eqn-phase}) the eigenvalue $\lambda_k(\alpha)$ 
 of an element $[h_{\alpha}] \in \mathcal M_h$, and its multiplicity are well defined; 
 and whether or not an eigenvector vanishes at a given vertex is well defined.  
	
	We consider $\lambda_k:\mathcal M_h \to \RR$
	to be a sort of generalized Morse function. If $\lambda_k$ is smooth at a point $x=[h_{\alpha}] \in \mathcal M_h$ (in which case it is also analytic) we say that
	$x$ is a {\em smooth point} of $\lambda_k$. A {\em critical point} of $\lambda_k$ is either a non-smooth point
	or a smooth point where $\nabla\lambda_k(x) = 0$. Consider the following possibilities:
	\begin{enumerate}
		\item[(0)] $x$ may be a \textbf{smooth, regular} (i.e., not critical) point of $\lambda_k$.
		\item[(1)]  $x$ may be a \textbf{symmetry} point of $\mathcal M_h$.
		\item[(2)] $x$ may be a \textbf{non-symmetry, smooth}, (possibly degenerate) \textbf{critical} point of $\lambda_k$ .  
		\item[(3)] $x$ may be a \textbf{non-smooth} point of $\lambda_k$.
	\end{enumerate}

	\begin{thm} \label{thm-index}
		Fix properly supported $h \in \mathcal S(G)$. Consider $\lambda_k:\mathcal M_h \to \RR$ as above. 
		\begin{enumerate}
			\item \label{item-symmetry}
			Every symmetry point of $\mathcal M_h$ is a  critical point of $\lambda_k$.
			\item \label{item-perfect}
			If the only critical points of $\lambda_k$ on $\mathcal M_h$ are the symmetry points and if they are nondegenerate
			then the number of such critical points of index $s$ is $\left( \begin{smallmatrix} \beta \\ s \end{smallmatrix} \right)$.
			\item\label{item-more-generally} Suppose $h_{\alpha} \in \TT_{h}$ has a  simple eigenvalue $\lambda_k(h_{\alpha})$ with eigenvector $v$. Then
			$(h_{\alpha})_{rs}\bar v_r v_s$ is real for all $ r \sim s$ if and only if $h_{\alpha}$ is a critical point of $\lambda_{k}$ as a function on $ \TT_{h} $, in which case 
			$h_{\alpha}$ is gauge equivalent to a matrix $h'$ such that $h'_{rs} \notin \RR \implies \bar{v}_rv_s = 0$.
			\item\label{item-nonvanishing}
			In particular, if $h_{\alpha}\in\TT_{h}$ is a critical point of $\lambda_k$ and $\lambda_k(h_{\alpha})$ is {\em simple} with nowhere vanishing eigenvector  then $[h_{\alpha}] \in \mathcal M_h$ is a symmetry point.\\
			(Equivalently, there exists $ \theta\in\TT^{n} $ such that $h_{\alpha+d\theta}\in \mathcal S(h)$.)
			\item \label{item-nodal-Morse} A critical point $[h_{\alpha}]\in \mathcal M_h$ as in (4) is nondegenerate and its Morse index is the nodal surplus, $\phi(h_{\alpha},k)-(k-1)$.
			\item\label{item-equal-diagonals} If the diagonal entries of $h$ are all equal then the average nodal count distribution is symmetric,
			\[P(\mathcal S(h))_s =P(\mathcal S(h))_{\beta-s},\qquad s\in\{0,1,\ldots,\beta\}.\]
			
			\item \label{item-average} Suppose that for each $k$ ($1 \le k \le n$) each critical point $h_{\alpha}\in\TT_{h}$ of $\lambda_k$ has $\lambda_k(h_{\alpha})$ as a simple eigenvalue with nowhere vanishing eigenvector.
			Then the average nodal count distribution is binomial:
			\[ P(\mathcal S(h))_s = 2^{-\beta}\left(\begin{matrix} \beta \\ s \end{matrix} \right).\]
			
		\end{enumerate}
		
	\end{thm}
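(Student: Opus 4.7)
The plan is to address the seven parts in an order that reflects their dependencies: the symmetry and first-order perturbation statements (1), (3), (4) first; then the key Morse-theoretic computation (5); then the counting consequences (2), (7); and finally the symmetric-distribution statement (6).

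For (1), complex conjugation $\sigma$ is an involution on $\mathcal M_h$ leaving $\lambda_k$ invariant, and by Lemma~\ref{lem-symmetry-count} its fixed set consists of the $2^\beta$ isolated symmetry points. Since this fixed set is $0$-dimensional in the $\beta$-dimensional torus $\mathcal M_h$, the differential $d\sigma$ must equal $-I$ on every tangent space at a symmetry point, forcing $d\lambda_k = d\lambda_k\circ d\sigma = -d\lambda_k = 0$. For (3), computing the derivative of $\lambda_k$ along the path $\alpha(t)=t\dot\alpha$ in $\TT_h$ via the Hellmann--Feynman formula yields
\[
\tfrac{d}{dt}\lambda_k\bigr|_0 \;=\; \langle v,\dot h\, v\rangle \;=\; 2\sum_{r<s}\dot\alpha_{rs}\,\mathrm{Im}\bigl((h_\alpha)_{rs}\bar v_r v_s\bigr),
\]
using $\dot h_{rs} = i\dot\alpha_{rs}(h_\alpha)_{rs}$ and Hermiticity. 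Vanishing for all antisymmetric $\dot\alpha$ supported on $G$ is equivalent to $(h_\alpha)_{rs}\bar v_r v_s\in\RR$ on every edge. Setting $\theta_r = -\arg(v_r)$ where $v_r\ne 0$ (and arbitrary elsewhere), the gauge transform $h' = d\theta * h_\alpha$ has eigenvector $e^{i\theta}v$ with non-negative real entries; the identity $h'_{rs}|v_r||v_s| = (h_\alpha)_{rs}\bar v_r v_s \in \RR$ then forces $h'_{rs}\in\RR$ whenever $v_rv_s\ne 0$. When $v$ is nowhere-vanishing this yields (4), with $h'\in\mathcal S(h)$.

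The main obstacle is (5), the Berkolaiko--Colin-de-Verdière nodal index theorem. At a symmetry point $h'$ with real eigenvector $v$ the first-order term vanishes, and I would expand $\lambda_k(h_\alpha)$ to second order in $\alpha$, using $\ddot h_{rs} = -\dot\alpha_{rs}^2\, h'_{rs}$ together with the Rayleigh--Schrödinger formula for the first-order eigenvector correction involving a sum over the remaining eigenpairs. The difficulty lies in identifying the signature of the resulting quadratic form on $T_{[h']}\mathcal M_h\cong H^1(G;\RR)$ as exactly the nodal surplus $\phi(h',k)-(k-1)$; the standard route is a Schur-complement / inertia argument, rewriting the Hessian as the inertia of a saddle-point block matrix constructed from the spectral decomposition of $h'$, and then bookkeeping contributions of ``nodal'' vs. ``non-nodal'' edges against eigenvalues of $h'$ below $\lambda_k$. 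With (5) in hand, part (2) is immediate Morse theory on $\TT^\beta$: the total Betti number is $\sum_s\binom{\beta}{s}=2^\beta$, which by Lemma~\ref{lem-symmetry-count} equals the number of symmetry points, so the Morse inequalities saturate and $m_s=\binom{\beta}{s}$. Part (7) combines (1), (4), (5): under the hypothesis every critical point of every $\lambda_k$ on $\mathcal M_h$ is a nondegenerate symmetry point with Morse index equal to the nodal surplus. Counting pairs $(k,[h'])$ of index $s$ two ways --- once as $n\binom{\beta}{s}$ from (2), once via the $2^{n-c}$-to-$1$ correspondence between $\mathcal S(h)$ and symmetry points in $\mathcal M_h$ --- gives $P(\mathcal S(h))_s = 2^{-\beta}\binom{\beta}{s}$.

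For (6), I would use the involution $\tau:h'\mapsto 2cI-h'$ on $\mathcal S(h)$, where $c$ is the common diagonal value; $\tau$ preserves the diagonal and negates every off-diagonal entry, hence is a bijection $\mathcal S(h)\to\mathcal S(h)$. It sends the $k$-th eigenpair $(\lambda_k,v)$ of $h'$ to the $(n{+}1{-}k)$-th eigenpair $(2c-\lambda_k,v)$ of $\tau(h')$, and reverses the sign of $h'_{rs}\bar v_r v_s$ on every edge, so $\phi(\tau(h'),n{+}1{-}k) = |E|-\phi(h',k)$. Using $\beta=|E|-n+1$, the nodal surplus at step $k$ of $h'$ equals $\beta$ minus the nodal surplus at step $n{+}1{-}k$ of $\tau(h')$; averaging over the bijection $\tau$ yields $P(\mathcal S(h))_s = P(\mathcal S(h))_{\beta-s}$.
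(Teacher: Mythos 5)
Parts (1)--(4), (6), (7) of your proposal are essentially correct and track the paper's proof in spirit, with some variation in route: for (1) you use the general fact that a smooth involution with isolated fixed points has $d\sigma=-I$ there (the paper instead computes the directional derivative directly via $\lambda_k(t\alpha*h')=\lambda_k(-t\alpha*h')$); for (6) you argue the symmetry directly on nodal counts via the bijection $h'\mapsto 2cI-h'$ (the paper phrases it as an index-reversing inversion). Both alternatives are valid. One small omission in (1): $\lambda_k$ may fail to be smooth at a symmetry point (if the eigenvalue is not simple there), so before invoking $d\lambda_k=d\lambda_k\circ d\sigma$ you should note that non-smooth points are critical by definition, as the paper does. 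Your count in (7), converting between $\mathcal S(h)$ and $\mathcal M_h$ via the $2^{n-c}$-to-$1$ map, is the paper's argument.

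The genuine gap is part (5), which is the main technical content of the theorem. You describe a plan (``I would expand \dots, the standard route is a Schur-complement / inertia argument \dots and then bookkeeping contributions of nodal vs.\ non-nodal edges'') and explicitly acknowledge that the difficulty of identifying the signature of the Hessian with the nodal surplus remains, but you do not carry it out. The paper (following Colin de Verdi\`ere) resolves this with a specific device: fix $\lambda=\lambda_k(h)$ and $v$, and introduce the auxiliary function $F(\alpha)=\langle v,(\alpha*h-\lambda)v\rangle$ on $\TT(G)$, {\em not} equal to $\mu(\alpha)=\lambda_k(\alpha*h)$. One then shows (i) $\Hess F$ is diagonal with entries $-2\bar v_r h_{rs} v_s$, so $\ind\Hess F=\phi(h,k)$ outright; (ii) on $d\RR^n$, $\Hess F$ is congruent via $M_v=\diag(v)$ to $2(h-\lambda)$, so by Sylvester's law of inertia $\ind(\Hess F|_{d\RR^n})=k-1$; and (iii) $\Hess\mu$ and $\Hess F$ agree on the $\Hess F$-orthogonal complement $V$ of $d\RR^n$, which requires the identity $\Hess\mu(\gamma,\delta)=2\Re\langle v',\partial_\delta h\,v\rangle+\Hess F(\gamma,\delta)$ and the characterization $\alpha\in V\iff(\partial_\alpha h)v=0$. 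The Morse index of $\lambda_k$ on $\mathcal M_h$ is then $\ind(\Hess F)-\ind(\Hess F|_{d\RR^n})=\phi(h,k)-(k-1)$. Your sketch of a second-order Rayleigh--Schr\"odinger expansion plus an inertia argument is in the right spirit but omits precisely this decoupling mechanism, which is where the actual identification with the nodal count happens; without it part (5), and hence parts (2) and (7) which rely on it, are not established.
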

	Parts (\ref{item-symmetry}) and (\ref{item-nodal-Morse}) of Theorem \ref{thm-index} are due to Berkolaiko and Colin de
	Verdi\`ere\footnote{In both \cite{Berkolaiko} and \cite{CdeV2} and  the matrix $h$ was assumed to be real symmetric but essentially the same proof works in
		general.} \cite{Berkolaiko,CdeV2}. Part (\ref{item-perfect}) is an immediate consequence,
	also known to both of these authors. Part (\ref{item-more-generally}) was already observed in \cite[Theorem A.1 and Lemma A.2 ]{BanBerWey}. Part (\ref{item-nonvanishing}) is an immediate consequence known to the authors of \cite{BanBerWey}. It says that the only simple critical points of $\lambda_k| \TT_h$ with non-vanishing eigenvector occur along the 
	intersection of $\TT_h$ with the conjugacy classes of symmetry points: the $ 2^{|E|} $ real elements $ \mathcal{S}(h) $.
	Proofs for Theorems \ref{thm-index} and \ref{thm-linkage} below will appear in \S  \ref{sec-proofs1} and \S \ref{sec-proofs2}.
 
\subsection{Example of matrices with Binomial nodal count distribution}\label{example-binomial}

Let $h = h_0 + \eta V$ be a Schr\"odinger operator on the complete graph, with $h_0$ properly supported (i.e., $(h_{0})_{rs}\ne0$ for all $r\ne s$), $V=\diag(V_{1},\ldots,V_{n})$
 with distinct entries, and $\eta \in \RR$. If $\eta$ is sufficiently large, then all matrices $\alpha*h \in \TT_h$ will have simple eigenvalues and nowhere vanishing eigenvectors, so
 $P(S(h))$ is binomial.

 To see that, set $\epsilon=\frac{1}{\eta}$ and let $h_{\epsilon}= \epsilon h= V+\epsilon h_{0}$. We treat $\alpha*h_{\epsilon}=V+\epsilon(\alpha*h_{0})$ as a small perturbation of $V$ whose distinct eigenvalues are $V_{j}$ with eigenvectors $e_{j}$ for $j=1,\ldots,n$. The min-max principle gives $|\lambda_{j}(\alpha*h_{\epsilon})-V_{j}|\le \max_{rs}|\epsilon(\alpha*h_{0})_{rs}|=\max_{rs}|\epsilon(h_0)_{rs}|$ so there is a uniform constant $C>0$ such that when $0<\epsilon<C$, the eigenvalues of $\alpha*h_{\epsilon}$ are distinct, for every $\alpha$. Suppose $0<\epsilon<C$ and let $v_{\epsilon}$ be the $j$-th eigenvector of $\alpha*h_{\epsilon}$. Comparing $v_{\epsilon}$ to $e_{j}$, perturbation theory gives $v_{\epsilon}(j)=1+O(\epsilon^{2})\ne 0$, and for $i\ne j$,
 \[|v_{\epsilon}(i)|=\epsilon\left|\frac{(\alpha*h_{0})_{ij}}{V_{i}-V_{j}}\right|+O(\epsilon^{2})\ge\epsilon\min_{r<s}\left|\frac{(h_{0})_{rs}}{V_{r}-V_{s}}\right|+O(\epsilon^{2})\ne0, \]
for sufficiently small $\epsilon$, uniformly in $\alpha$.

%Suppose $h_{\epsilon}=D+\epsilon h'$, where $D=\diag\{1,2,\ldots,n\}$ and $h'$ is real symmetric with $h'_{i,i}=0$ for all $i$ and $h'_{i,j}\ne 0$ for all $i\ne j$. It follows from perturbation theory that for sufficiently small $\epsilon>0$ and for any anti-symmetric $\alpha\in\mathcal{A}(G)$, the element
%$\alpha*h_{\epsilon}=D+\epsilon(\alpha*h')$ has distinct eigenvalues with nowhere vanishing eigenvectors. 
%For sufficiently small $\epsilon > 0$ the eigenvalues are distinct as $\epsilon\to 0$. 
%If $v_{\epsilon}$ is the $j$'th eigenvector of $\alpha*h_{\epsilon}$, then its entries have the form $v_{\epsilon}(j)=j+O(\epsilon)\ne0 $ and $v_{\epsilon}(i)=\epsilon \frac{(\alpha*h')_{i,j}}{i-j}+O(\epsilon^{2})\ne 0$ for small enough $\epsilon>0$. 
%As a result, $P(\mathcal{S}(h_{\epsilon}))$ is binomial.

%Similarly, suppose $h_0$ is properly supported on the complete graph.  Suppose the potential, $V$, is
% diagonal with distinct entries and set $h = h_0 + \eta V$ with $\eta \in \RR$. If $\eta$ is sufficiently large
 %then all matrices $\alpha*h \in \TT_h$ will have simple eigenvalues and nowhere vanishing eigenvectors, so
 %$P(S(h))$ is binomial.

	\section{Exceptional critical points and the linkage equation}\label{subsec-linkage}
	In this section we consider the case where $ [h_{\alpha}]\in\mathcal{M}_{h} $ is an exceptional critical point of 
 $ \lambda_{k} $ (cf.~\S \ref{subsec-exceptional}). That is, $ [h_{\alpha}] $ is a non-symmetry, smooth, critical point with 
 $ \lambda_{k}(h_{\alpha}) $ simple. According to Theorem \ref{thm-index} the eigenvector $ v $ corresponding to $ \lambda_{k}(h_{\alpha}) $ 
 vanishes somewhere. (By generic choice of $ h $ we can guarantee that every eigenvector of $h$ is nowhere vanishing (cf.~\cite{Urschel}) but we cannot guarantee the same holds for all $h_{\alpha}\in \TT_h$.) We address the simple case of eigenvector 
 $ v $ that vanishes at a single vertex. By possibly replacing $ h_{\alpha} $ with a gauge equivalent $ h_{\alpha+d\theta} $ and $ v $ with 
 $ e^{i\theta}v $ we may assume that $ v $ is real with non-negative entries. 
 The setting for Theorem \ref{thm-linkage} is described next.    
	\subsection{The Setting}\label{subsec-setting}
	To simplify the notation we assume the graph $G$ has $n+1$ vertices labeled  $0,1,2,\cdots,n$, with corresponding properly supported real symmetric matrix $h \in \mathcal{S}(G)$. Suppose $h_{\alpha} = \alpha*h$ is a critical point of $\lambda_k$ with a simple eigenvalue $ \lambda:=\lambda_{k}(h_{\alpha}) $ and a normalized eigenvector $v = (v_0,v_1,\cdots, v_n)=(0,v')$ and $v_0 = 0$, $v_r> 0$ for $1 \le r \le n$. Writing $ h $ and $ h_{\alpha} $ as block matrices in the $ \RR^{n+1}=\RR\oplus\RR^{n} $ decomposition gives
	\begin{equation}\label{eqn-block-matrix} 
		h = \left(\begin{matrix} a & b \\ b^* & D\end{matrix} \right) \ \text{ and }\ h_{\alpha} = \left(\begin{matrix}a & b_{\alpha} \\ b_{\alpha}^* & D_{\alpha} \end{matrix} \right)
		\text{ with } \left( \begin{matrix} a & b_{\alpha}\\
			b^*_{\alpha} & D_{\alpha} \end{matrix} \right) \left(\begin{matrix} 0 \\ v' \end{matrix}\right)
		= \left( \begin{matrix} 0 \\ \lambda v' \end{matrix} \right). \end{equation}
	Let $ E_{0} $ be the edges connected to vertex $ 0 $. For convenience, write $ r\in E_{0} $ if $ 0r\in E_{0} $. Let $ H $ be the induced subgraph of $ G $ on the vertices $ r\ge 1 $. Thus, $ H $ is obtained from $ G $  by removing vertex $ 0 $ and its edges $ E_{0} $. Then, $ a\in\RR,\ b\in\RR^{E_{0}},\ b_{\alpha}\in\CC^{E_{0}},\  D\in\mathcal{S}(H) $, and $ D_{\alpha}\in\mathcal{H}(H) $. In fact, since $ h_{\alpha} $ is critical and $ v_{r} $ is real and non-zero for $ r\ge 1 $, then $ D_{\alpha} $ is real by part (\ref{item-more-generally}) of Theorem \ref{thm-index}. Hence, $ D_{\alpha}\in\mathcal{S}(H)$ is a signing of $ D $. 
	%with $b_{\alpha} = \left( e^{i \alpha_{01}}b_1, e^{i\alpha_{02}}b_2,\cdots,e^{i\alpha_{0n}}b_n\right)$.
	%Then $D_{\alpha}v' = \lambda_{k'}(D_{\alpha}) v'$ with $ \lambda_{k'}(D_{\alpha})=\lambda $ for some $ k' $ between $ 1 $ to $n $.
	%Further assume $h_{\alpha}$ is a {\em simple exceptional} critical point, meaning that the eigenvalue $ \lambda_{k'}(D_{\alpha}) $ is simple.  
	%Then by Theorem \ref{thm-index} part (\ref{item-nonvanishing}), 
	%$D_{\alpha}$ is gauge-equivalent (using the coordinates $r = 1, 2, \cdots, n$) to a symmetry point of the torus corresponding to the subgraph $H$. 
	%So, there is some $ \theta\in\TT^{n} $ such that $D_{\alpha+d\theta}$ and $ e^{i\theta}v' $ are real and $ b_{r}e^{i\theta_{r}}v_{r}\ge0 $ for all $ r\ge 1 $. For convenience, rename $ convinience $

	The vector $ b_{\alpha} $ has the form $ (b_{\alpha})_{r}=e^{i\alpha_{0r}}b_{r} $ for $ r\in E_{0}$. Let $ M_{r}:=|b_{r}v_{r}|>0$ and $ \theta_{r}\in\RR/2\pi\ZZ  $ be the polar coordinates of $ (b_{\alpha})_{r}v_{r}=M_{r}e^{i\theta_{r}} $ for every $ r\in E_{0} $.
	%that is, modulo multiplication by $e^{i \theta_0}$.  

		\subsection{Configuration space of a planar linkage}\label{subsec-planar}
		Equation (\ref{eqn-block-matrix}) implies that the following
	{\em planar linkage equation} (\cite{Walker, Farber, Millson}) holds:
	\begin{equation}\label{eqn-linkage}
		b_{\alpha}.v' = \sum_{r \in E_0} e^{i \theta_r}M_r = 0.\end{equation}
		This equation (\ref{eqn-linkage}) describes a collection of vectors $M_re^{i\theta_r}\in \CC=\RR^2$ in the plane, placed end to tail, that starts
		and ends at the origin, that is, a planar linkage, depending on a collection of lengths $L = \{M_r\}_{r\in E_{0}}$.	Let $S^1\subset \CC$ be the unit circle. The {\em configuration space} $\Theta_L$ (see \cite{Farber}) of the planar linkage defined by (\ref{eqn-linkage}) is the set of solutions modulo rotations, that is, 
		\begin{equation}\label{eqn_configuration_space}
			\Theta_L =\left\{(e^{i\theta_{r}})_{r\in E_{0} } : \underset{r\in E_0}{\Sigma} 
        e^{i \theta_r}M_r = 0\right\}/S^{1}\ \subset\  (S^{1})^{E_0}/S^{1},
		\end{equation}
		where the unit circle acts diagonally on $(S^1)^{E_{0}}$ by multiplication. The planar linkage is said to be {\em generic} if for any $ \epsilon\in\{-1, 1\}^{E_{0}} $,
		\begin{equation}\label{eqn_generic_linkage}
			\sum_{r\in E_{0}} \epsilon_rM_r \ne 0.
		\end{equation}
	Let $ M_{s} $ be the maximal length, $ M_{s}=\max(M_{r})_{r\in E_{0}} $. If $ M_{s}>\sum_{r\ne s}M_{r}$ then there are no solutions, $ \Theta_{L}=\emptyset $. If the planar linkage is generic and 
 $ M_{s}<\sum_{r\ne s}M_{r}$,  
 then $ \Theta_{L} $ is a smooth manifold of dimension $|E_{0}|-3$ (\cite{Millson, Farber, Walker}) whose Betti numbers	have been computed in \cite{Hausmann, Farber}.
	Let $M_{t}$ be the second largest length.  If $M_s + M_t \le \frac{1}{2}\sum_r M_r$ then $\Theta_L$
	is connected, otherwise it has two connected components, exchanged by complex conjugation, each diffeomorphic to the torus of dimension  $|E_0|-3$.

	\subsection{Exceptional points}\label{subsec-critical manifold} In the notation of \S \ref{subsec-setting}, suppose $h_{\alpha}=\alpha*h$ is an exceptional critical point of $\lambda_k$ with real eigenvector $v = (0,v')$ and simple eigenvalue $\lambda = \lambda_k(h_{\alpha})$. 
 The complex conjugate point $\bar{h}_{\alpha}=(-\alpha)*h$ is also a critical point of $\lambda_k$, with the same eigenvalue $\lambda$ and eigenvector $ v$. Moreover, $\lambda$ is also an eigenvalue of $D_{\alpha}$, say
    $\lambda = \lambda_{k'}(D_{\alpha})$ is its $k'$-th eigenvalue.
    
	 Let $F$ be the connected component of the critical set in $\mathcal M_h$ of $\lambda_k$ that contains $h_{\alpha}$, union with the connected component of the critical set of $\lambda_k$ that contains $\bar h_{\alpha}$, noting that these two sets may be the same\footnote{Thus, the set $F\subset \mathcal M_h$ has either one or two connected components.}. 
\begin{thm} \label{thm-linkage}  Assume the following:
	 
\begin{enumerate}
	\item The eigenvalue $\lambda =\lambda_{k'}(D_{\alpha})$  is simple.
 %the $k'$-th eigenvalue of $D_{\alpha}$ and it is simple .
	\item The collection $\{M_r = |b_{r} v_r|\}_{r \in E_0}$  is generic \eqref{eqn_generic_linkage}.
    \item For any $[h']\in F$ the eigenvalue $\lambda = \lambda_k(h')$ is simple, and
\begin{equation}\label{eqn-ch}c(h')=\sum_{j\ne k}\frac{|\psi_{j}(0)|^{2}}{\lambda_{k}(h')-\lambda_{j}(h')}\ne0,\end{equation}
  where $ (\psi_{j})_{j=1}^{n+1} $ are a choice of orthonormal eigenvectors of $ h' $ corresponding to the ordered eigenvalues.
			\end{enumerate}
Then the critical set $F$ coincides with the explicitly defined set
\begin{equation}%\label{eqn-F}
    F'=\left\{[h']\in\mathcal M_h: h'v=\lambda v \text{ and there exists }\alpha_0'\in\TT(E_{0})
    \text{ such that } h'=\alpha_{0}'*h_{\alpha} \right\}.
                \end{equation}
It is a nondegenerate (Morse-Bott) critical submanifold  of dimension $ |E_{0}|-3 $ which is diffeomorphic to the configuration space $ \Theta_L $. Moreover, the Morse index of this critical submanifold is equal to 
			\[\ind(F)=\phi(D_{\alpha},k')-(k'-1)+\begin{cases}
				2 & \text{\emph{if}}\ c(h')<0\\
				0 & \text{\emph{if}}\ c(h')>0
			\end{cases}.\]  	
		\end{thm}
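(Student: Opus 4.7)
The plan is to prove the theorem in four stages: $(1)$ $F'\subseteq F$; $(2)$ $F'\cong\Theta_L$ as a smooth $(|E_0|-3)$-manifold; $(3)$ $F\subseteq F'$; and $(4)$ the Hessian of $\lambda_k$ on the normal bundle $NF'$ is nondegenerate with the stated Morse index. For $(1)$, for any $[h']\in F'$ the identity $h'v=\lambda v$ holds by definition, and since $D_\alpha$ is unchanged and real on $F'$ I only need to verify the criticality criterion $h'_{rs}\bar{v}_rv_s\in\RR$ of Theorem~\ref{thm-index}(\ref{item-more-generally}); it is trivial for edges in $H$ and for edges at vertex $0$ since $\bar{v}_0v_s=0$. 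For $(2)$, elements of $F'$ are parameterized by the phases $(\theta_r)_{r\in E_0}$ with $(b_{\alpha'})_rv_r=M_re^{i\theta_r}$ subject to the linkage equation~(\ref{eqn-linkage}), modulo the diagonal $S^1$-action from gauge at vertex $0$ (which preserves $D_\alpha$); this is precisely $\Theta_L$, and assumption~$(2)$ (generic lengths) gives the smoothness and dimension.

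For $(3)$, any critical point $[h'']$ in the connected component of $[h_\alpha]$ has $\lambda_k(h'')$ simple by condition~$(3)$, so its eigenvector $v''$ varies continuously and stays close to $v$; in particular $v''(r)>0$ for $r\ge 1$. If $v''(0)\ne 0$ then $v''$ is nowhere vanishing, and Theorem~\ref{thm-index}(\ref{item-nonvanishing}) would force $[h'']$ to be a symmetry point; but symmetry points are isolated by Lemma~\ref{lem-symmetry-count}, contradicting positive-dimensionality of the critical set at $[h_\alpha]$. Hence $v''(0)=0$ and $[h'']\in F'$; the same applies near $[\bar h_\alpha]$, and condition~$(3)$ lets me propagate this along the remainder of $F$.

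For $(4)$, the normal bundle has rank $\beta-(|E_0|-3)=\beta_H+2$, and I split $NF' = N_D\oplus N_E$, with $N_D$ ($\beta_H$-dim) the $D$-block perturbations modulo gauge on $H$, and $N_E$ ($2$-dim) the $E_0$-perturbations transverse to the linkage tangent, parameterized by the complex number $z=\sum_r M_re^{i\theta_r}d\alpha_{0r}$. On $N_E$, a curve $h_t\in\TT_h$ in a pure $E_0$-direction satisfies $\dot h\, v = iz\,e_0$, and $\langle v,\ddot h v\rangle=0$ thanks to $v_0=0$; Rayleigh--Schr\"odinger perturbation theory then yields
\[
\Hess(\lambda_k)\bigl|_{N_E}(\eta,\eta) \;=\; 2|z|^2\sum_{j\ne k}\frac{|\psi_j(0)|^2}{\lambda-\lambda_j} \;=\; 2|z|^2\, c(h'),
\]
which by condition~$(3)$ is sign-definite on $\CC\simeq\RR^2$, contributing $0$ or $2$ to the Morse index according to $\mathrm{sign}(c(h'))$. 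On $N_D$, a Schur-complement reduction around vertex~$0$ recasts the eigenvalue problem for $h_\alpha + s\xi$ as an effective problem on $H$ with Hamiltonian $\tilde H(\lambda)=D_\alpha+s\xi - b_\alpha^*b_\alpha/(a-\lambda)$; the linkage identity $b_\alpha v'=0$ kills the coupling correction through order $s^2$, so the $N_D$-Hessian agrees with the Hessian of $\lambda_{k'}$ at $D_\alpha\in\mathcal M_D$, which by condition~$(1)$ and Theorem~\ref{thm-index}(\ref{item-nodal-Morse}) has Morse index $\phi(D_\alpha,k')-(k'-1)$.

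The main obstacle will be controlling the cross block $H_{DE}$ of the Hessian. I plan to exploit $b_\alpha v'=0$ once more: in the joint Schur-complement expansion of $\lambda_k(s,t)$ at $h_\alpha + s\xi + t\eta$, the mixed second derivative $\partial_s\partial_t\lambda_k|_{(0,0)}$ is proportional to $b_\alpha v'$ and hence vanishes, so the Hessian decouples in the splitting $N_D\oplus N_E$ and the two Morse indices simply add. Should any residual cross-coupling remain, the sign-definiteness of $H_{EE}$ together with a Schur-complement argument at the level of symmetric bilinear forms still yields additivity of signatures. Either route gives the claimed Morse index $\phi(D_\alpha,k')-(k'-1) + (0\text{ or }2)$.
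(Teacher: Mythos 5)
Your argument for $F \subseteq F'$ (step 3) has a genuine gap, and this is where your route diverges most from the paper's. You show that for $[h'']$ near $[h_\alpha]$ in the critical set, the eigenvector $v''$ satisfies $v''(0)=0$, and then assert ``$[h''] \in F'$.'' But this does not follow: membership in $F'$ requires $h''v = \lambda v$ with the \emph{fixed} eigenvector $v$, equivalently that $h''$ agrees with $h_\alpha$ on the $D$-block. Knowing only $v''(0)=0$ gives neither. You would additionally need to gauge-fix so $v''$ is real positive on $H$, invoke criticality (Theorem~\ref{thm-index}(\ref{item-more-generally})) to conclude that the $D$-block of $h''$ is a real signing of $D$, and then use discreteness of signings plus connectedness; but that requires $v''(r) \ne 0$ for all $r \ge 1$ at \emph{every} point of $F$, not just in a local neighborhood of $[h_\alpha]$, and $v''$ could a priori vanish at a positive-index vertex somewhere along $F$. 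The phrase ``condition (3) lets me propagate this along the remainder of $F$'' only gestures at the missing argument. The paper avoids the difficulty by inverting your order: it establishes the Hessian computation first, shows $\ker\Hess\lambda_k([h']) = T_{[h']}F'$ at each $[h'] \in F'$, and then deduces $F = F'$ from the Morse--Bott property (since $F'$ is a closed submanifold of $F$ with $T_{[h']}F' = T_{[h']}F$ everywhere, it must be a union of components of $F$; as $F'$ contains $[h_\alpha]$ and $[\bar h_\alpha]$, this forces $F = F'$).

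On step 4 the conclusions are right, but the decomposition needs more care. The complement $N_D$ must be taken to be the paper's $V_H = \{\alpha_H \in \mathcal A(H) : (\partial_{\alpha_H}h_\alpha)v = 0\}$, not an arbitrary complement to the gauge directions $d_H\RR^H$; otherwise the cross block does not vanish. And the reason it vanishes is not that the mixed second derivative ``is proportional to $b_\alpha v'$.'' Rather, for $\delta \in V_H$ one has $(\partial_\delta h_\alpha)v = 0$, so $(h_\alpha-\lambda)\partial_\delta v = 0$ and $\partial_\delta v \propto v$; in particular $(\partial_\delta v)_0 = 0$, so $\langle\partial_\delta v,(\partial_\gamma h_\alpha)v\rangle=0$ for any $\gamma$ supported on $E_0$, since $(\partial_\gamma h_\alpha)v$ is concentrated at vertex $0$. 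Your $N_E$ computation giving $2|z|^2 c(h')$ matches the paper's $A$-block exactly. The Schur-complement sketch for the $N_D$ Hessian is plausible but not carried out to second order; the paper's direct argument using $V_H$ is cleaner and avoids the nonlinear $\lambda$-dependence of the effective Hamiltonian.
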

\begin{rems}  Recall that the pseudo-inverse $B^{+}$ of a Hermitian matrix $B$ with kernel $V$
has the same kernel $V$ and acts as $B|_{V^{\perp}}^{-1}$ on $V^{\perp}$.
If we define the resolvant $(h'-z)^{-1}$ at $z=\lambda_{k}(h')$ using the pseudo inverse $A=\left(h'-\lambda_{k}(h')\right)^{+}$, then $c(h')=A_{0,0}$. 

A related observation for periodic metric (quantum) graphs appears in
\cite{BerkolaikoKha} \S 3.4, where certain graphs are constructed so that the maximum of their first spectral band is obtained on a critical manifold which is a planar linkage configuration space. 
\end{rems}

	\section{Proof of Theorem \ref{thm-index}}\label{sec-proofs1}
	\subsection{} For part (\ref{item-symmetry}), suppose $h'\in\TT_{h}$ is a symmetry point (of $\TT_{h}$), namely $h' = \bar h' \in \mathcal S(h)$. If $h'$ is not a smooth point of $\lambda_{k}$, then it is a critical point. Suppose $h'$ is smooth, then 
	the directional derivative of $\lambda_k$ in the direction $\alpha \in \mathcal A(G)$ is $\frac{d}{dt}\lambda_k(t \alpha *h')\vert_{t=0}=0$  because
	\[ \lambda_k(t\alpha*h') = \lambda_k(\overline{t\alpha*h'}) = \lambda_k (-t\alpha*h').\]
	If $h''\in [h']$, then it is conjugate to $h'$. Conjugation takes a neighborhood of $h'$ in $\mathcal H(G)$ to a neighborhood of $h''$, 
	preserving the eigenvalue $\lambda_k$ so it also preserves the derivative of $\lambda_k$.

	Part (\ref{item-perfect}) follows immediately from the Morse inequalities, $C_i(\mathcal M_h) \ge b_i(\mathcal M_h)$ where $C_i$ denotes
	the number of critical points of index $i$ and where $b_i$ is the $i$-th Betti number of $ \mathcal M_h $.  
 There are $2^{\beta}$ criticial
	points by Lemma \ref{lem-symmetry-count}, and the sum of the Betti numbers of $ \mathcal M_h $, a $ \beta $-dimensional torus, is also $2^{\beta}$. So $C_i = b_i=\binom{\beta}{i}$  for all $i$. 
	
	Assuming parts (\ref{item-nonvanishing}) and (\ref{item-nodal-Morse}), the proof of part (\ref{item-average}) is a simple computation.
	In Part (\ref{item-average}) we assume all the critical points of $\lambda_{k}$ correspond to simple eigenvalues with nowhere-vanishing eigenvectors, which means that all critical points are nondegenerate and are symmetry points, by Part (\ref{item-nonvanishing}). Part
	(\ref{item-nodal-Morse}) says that in such cases the nodal surplus equals the Morse index.  Therefore, the average nodal surplus is
	\begin{align*}
		P(\mathcal S(h))_s &= 2^{-|E|} \sum_{h' \in \mathcal S(h)} P(h')_s\\
		&= \frac{2^{-|E|}}{n} \sum_{h'\in \mathcal S(h)} \#\big\{ k\le n :\ \text{index } (\lambda_k(h')) = s \big \} \\
		&=  \frac{2^{-|E|}}{n} \sum_{k=1}^{n} \# \big\{ h' \in \mathcal S(h): \ \text{index }(\lambda_k(h')) = s \big\} .\end{align*}
	{Using Lemma \ref{lem-symmetry-count} the number inside the parenthesis can be expressed on the quotient $\mathcal M_h$}
	\begin{align*}
		P(\mathcal S(h))_s &= \frac{2^{-\beta}}{n} \sum_{k=1}^{n} \#\big\{ [h'] \in \left[\mathcal S(h)\right]: \ \text{index }(\lambda_k([h'])) = s  \big\}\\
		&= 2^{-\beta} \left( \begin{matrix} \beta \\ s \end{matrix} \right).\end{align*}
	because, by part (\ref{item-perfect}), the number in the parentheses is independent of $k$.
	
	For part (\ref{item-equal-diagonals}), by subtracting a multiple of the identity we may assume the diagonal entries of $h$ are all zero.
	Let $\alpha_{\pi}\in\mathcal{A}(G) $ be properly supported on $G$, with $\pm \pi$ on the non-zero entries. For any $h_{\alpha}=\alpha*h \in \TT_h$ the element $-h_{\alpha} = (\alpha+\alpha_{\pi})*h \in \TT_h$ is also in the same torus but the order of the eigenvalues is reversed, $\lambda_{k}(h_{\alpha})=\lambda_{n-k}(-h_{\alpha})$. 
	This results in an inversion that sends every critical point of $\lambda_{k}$ with index $s$, to a critical point of $\lambda_{n-k}$ with index $\beta-s$. 
	When averaged it gives the needed symmetry around $\beta/2$.

	\subsection{} In this paragraph we prove parts (\ref{item-more-generally}) and (\ref{item-nonvanishing})
	of Theorem \ref{thm-index}. Let $\Tilde{h} \in \TT_{h}$ and suppose that $\lambda_k(\Tilde{h})$ has multiplicity one. $\lambda_{k}$ is analytic in a $\TT_{h}$ neighborhood of $\Tilde{h}$, and we ask when is it a critical point. To ease notation, for this paragraph only, we replace $\Tilde{h}$ by $h$ so that $h$ is now Hermitian rather than real symmetric.  
	Fix a direction $\alpha \in T_{0}(\TT(G))=\mathcal A(G)$ and consider the one-parameter perturbation of $h$ 
	in that direction $h_t = (t\alpha_{0})*h$ for small $ t\in(-\epsilon,\epsilon) $ so that $\dot{h}_{rs} = i\alpha_{rs}h_{rs}$. 
	
	Since $\lambda_k$ is simple,   we get analytic functions $ v(t)\in\CC^{n} $ and $ \lambda_{k}(t)\in\CC $, 
	such that for all $ t\in(-\epsilon,\epsilon) $, the vector $ v(t) $ is normalized and satisfies  $ h_t v(t) = \lambda_k(t)v(t)$. 
	Using Leibniz ``dot" notation for derivative with respect to $t$ at $ t=0 $ we have
	\begin{equation}\label{eqn-dot-formula} \dot{h}v + h \dot{v} = \dot{\lambda} v + \lambda \dot{v}.\end{equation}
	Taking the inner product with $v=v(0)$, using that $h$ is self adjoint, gives 
	\[\langle\nabla\lambda(h),\alpha\rangle:=\dot{\lambda} = \langle v, \dot{h} v \rangle=\sum_{r\sim s}i\alpha_{rs}(h_{rs}\bar{v}_{r}v_{s}-\bar{h}_{rs}v_{r}\bar{v}_{s}),\]
 so $(\nabla\lambda(h))_{rs}=i(h_{rs}\bar{v}_{r}v_{s}-\bar{h}_{rs}v_{r}\bar{v}_{s})=2\Im(\bar{h}_{rs}v_{r}\bar{v}_{s})$ for all $r\sim s$. Therefore, $h$ is a critical point if and only if $h_{rs}\bar{v}_{r}v_{s}\in\RR$. 
	
	%This must hold for all perturbations $\alpha$, and so $h$ is a critical point (i.e., $\dot{\lambda}=0$ for every $\alpha$) if and only if $i(h_{rs}\bar{v}_{r}v_{s}-\bar{h}_{rs}v_{r}\bar{v}_{s})=0$ therefore 
	%For the simplest perturbation fix $r<s$ such that $r\sim s$ and let $\alpha=e_{rs}-e_{sr}$, the matrix whose $(r,s)$ component is 1,
	%whose $(s,r)$ component is $ -1$ and all other components are $0$.  Then
	%\[ \dot{h}_{ab} = \begin{cases} ih_{rs} &\text{if } (a,b) = %(r,s)\\
	%	-i\bar{h}_{rs} &\text{if } (a,b) = (s,r) \\
	%	0&\text{otherwise} \end{cases}\]
	%hence $\dot\lambda = i( h_{rs}\bar v_r v_s - \bar{h}_{rs}v_r\bar{v}_s)$ which vanishes if and only if $h_{rs}\bar v_r v_s$ is real.
	
	Assume $\nabla\lambda(h) = 0$.  If $v_r \ne 0$ set $v_r = R_r e^{i \theta_r}$ with $R_r > 0$, otherwise set $\theta_r = 0.$
	Then $h_{rs}R_rR_se^{i(\theta_s-\theta_r)}$ is real for all $r,s$.
	Set $e^{i\theta} = \diag(e^{i\theta_1}, e^{i\theta_2},\cdots, e^{i\theta_n})$.
	Then $h' = e^{-i\theta} h e^{i\theta}$ is gauge equivalent to $h$, and $h'_{rs}=h_{rs}e^{i(\theta_s-\theta_r)}$ 
	is real whenever $\bar v_r v_s \ne 0$.  In particular,
	if $v$ is nowhere vanishing then $h'$ is a symmetry point.  If $h'_{rs} \notin \RR$ then either $v_r=0$ or
	$v_s= 0$.  This completes the proof.
	
	\subsection{}  In the following paragraphs we prove part (\ref{item-nodal-Morse}) of Theorem \ref{thm-index}.
	The result was proven by Berkolaiko \cite{Berkolaiko} and Colin de Verdi\`ere \cite{CdeV2} for real symmetric $ h $ with non-positive off-diagonal entries. Both proofs extend to any real symmetric $ h $, as the authors noted, if one defines the nodal count as in equation (\ref{eqn-nodal-count}). 
	We reorganize the proof of \cite{CdeV2} and present it here for completeness and for later use. Now let $h_{\alpha}\in \TT_{h}\subset\mathcal H(G)$ be an element whose equivalence class is a symmetry point, i.e., $h_{\alpha}$ is gauge equivalent to a real symmetric matrix. For convenience we change the notation slightly, using $h$ instead of $h_{\alpha}$, so suppose $h \in \mathcal H(G)$ is Hermitian properly supported on $G$, which is a critical point of $\lambda_{k}$, with a simple eigenvalue $\lambda:=\lambda_{k}(h)$ and a nowhere-vanishing eigenvector $v$. By Theorem \ref{thm-index},
	\begin{equation}\label{eqn-real}h_{rs}\bar v_r v_s \in \RR \text{ for all }\ r \sim s . \end{equation}
	%\subsection{}\label{subsec-proof-index}
	Let $\ind(Q)$ denote the number of negative eigenvalues of a quadratic form $Q$ and use $\Hess(F)$ for the Hessian of a function
	$F:\TT(G)\to \RR$, evaluated at $\alpha = 0$.  It is a quadratic form on the tangent space $T_0\TT(G) = \mathcal A(G)$.
	
	Define $\mu:\TT(G) \to \RR$ by $\mu(\alpha) = \lambda_k(\alpha * h)$. Since $\mu(\alpha + d\theta) = \mu(\alpha)$ for all
	$\theta \in \RR^n$ it follows that
	the Morse index of $\lambda_k$ at the point $h \in \TT(G)$ is  
	\[ \ind(\lambda_k)(h) = \ind(\Hess(\mu)) = \ind(\Hess(\mu)|V)\]
	for {\em any} complement $V \oplus d\RR^n = \mathcal A(G)$.  The trick (\cite{CdeV2})  is to define $F: \TT(G) \to \RR$ by
	\begin{equation}\label{eqn-F}
		F(\alpha) = \langle v, (\alpha*h-\lambda)v\rangle = \sum_{r \sim s} \bar v_r e^{i \alpha_{rs}} h_{rs} v_s + \sum_r |v_r|^2h_{rr}-\lambda\end{equation}
	{\em where  $\lambda= \lambda_k(h)$ and $v$ are constant},  and show that
	\begin{enumerate}
		\item $\alpha = 0$ is a nondegenerate critical point of $F$ with $\ind(\Hess(F)) = \phi(h,k)$
		\item $\ind(\Hess(F)|d\RR^n) = k-1$
		\item $\ind(\Hess(\mu)|V) = \ind(\Hess(F)|V)$ where $V$ is now chosen to be the orthogonal complement of $d\RR^n$ with respect to $\Hess(F)$:
		\begin{equation}\label{eqn-orthogonal}
			\alpha \in V \iff \langle \alpha, \Hess(F)(d\theta)\rangle = 0 \text{ for all } \theta \in \RR^n.\end{equation}
	\end{enumerate}
	These three steps complete the proof of Theorem \ref{thm-index} (\ref{item-nodal-Morse}) because they give:
	\begin{align*}    
	\ind(\lambda_k)(h) =\ind(\Hess(F)|V) &= \ind(\Hess(F)) - \ind(\Hess(F)|d\RR^n)\\
     &= \phi(h,k) - (k-1).\end{align*}

	\subsection{Step 0} \label{subsec-Step0} 
	To compute $ \Hess(\mu) $, namely the Hessian of $\lambda_k(h_{\alpha})$ at $ \alpha=0 $, let $\gamma,\delta \in \mathcal A_{n}(G)$, set $h(s,t) = (s\gamma + t \delta)*h$ and set
	$\mu(s,t) = \lambda_k(h(s,t))$ with corresponding normalized eigenvector $ v(s,t) $.  Using dot and prime to denote derivatives in $s, t$ at $s= 0, t=0$ respectively, we claim that
	\begin{equation}\label{Hess_calc}
		\langle \gamma, \Hess(\mu)\delta \rangle = 2 \Re\left(\langle v', \dot h v \rangle \right) + \langle \gamma, \Hess(F)\delta \rangle .
	\end{equation}
	Differentiating $\mu(s,t) -\lambda I = \langle v(s,t), (h(s,t)-\lambda)v(s,t) \rangle$ gives;
	\begin{align*}
		\dot {\mu}' = &\langle \dot{v}', (h - \lambda I) v\rangle+\langle v', \dot{h} v\rangle+\langle v', (h - \lambda I) \dot{v}\rangle\\
		&+\langle \dot{v}, h' v\rangle+\langle v, \dot{h}' v\rangle+\langle v, h' \dot{v}\rangle\\
		&+\langle \dot{v}, (h - \lambda I) v'\rangle+\langle v, \dot{h}v'\rangle+\langle v, (h - \lambda I) \dot{v}'\rangle\\
		=&2\Re[\langle v', \dot{h}v\rangle+\langle \dot{v}, (h - \lambda I) v'\rangle+\langle \dot{v}, h'v\rangle]
		+\langle v,\dot{h}'v\rangle,
	\end{align*}
	where $ \langle \dot{v}', (h - \lambda I) v\rangle $ and $ \langle v, (h - \lambda I) \dot{v}'\rangle $ vanish because $ v\in\ker(h-\lambda I) $. Furthermore, the criticality condition $\mu' = 0$ applied to (\ref{eqn-dot-formula}) gives 
	\begin{equation}\label{eqn-first-derivative}
		(h - \lambda I)v' = -h' v,
	\end{equation}
	so the term $ \langle \dot{v}, (h - \lambda I) v'\rangle+\langle \dot{v}, h'v\rangle $ vanishes and we are left with  $\dot {\mu}' =2\Re(\langle v', \dot h v \rangle ) + \langle v, \dot h'v \rangle$. We are done, since $ \dot {\mu}'=\langle \gamma, \Hess(\mu)\delta \rangle  $ and $ \langle v, \dot h'v \rangle=\langle \gamma, \Hess(F)\delta \rangle  $.
	
	\subsection{Step 1}  Calculate:
	\[
	\frac{\partial F}{\partial \alpha_{rs}} = i \left(\bar v_r e^{i \alpha_{rs}}h_{rs} v_s - \bar v_s e^{-i\alpha_{rs}}\bar h_{rs} v_r\right)\]
	which vanishes at $\alpha = 0$ by (\ref{eqn-real}), and
	\begin{equation}\label{eqn-Hessian} \frac{\partial^2F}{\partial^2\alpha_{rs}}(0) = - (\bar v_r h_{rs}v_s + \bar v_s \bar h_{rs}v_r)=-2\bar v_r h_{rs}v_s\end{equation}
	which is real and non-zero for $r\sim s$,   and all other second derivatives vanish.  Therefore $\Hess(F)$ is nondegenerate with index
	\[ \ind(\Hess(F)) = \# \big\{ r\sim s, r<s|\ \bar v_rh_{rs}v_s >0\big\} = \phi(h,k).\]
	
	\subsection{Step 2} \label{subsec-Step2}
	For (small) $t \in \RR$ and $\theta \in \RR^n$ we will find the second derivative of
	\[ F(d(t\theta)) = \langle v, (d(t\theta)*h - \lambda)v \rangle = \langle e^{-i t\Theta}v, (h - \lambda) e^{-i t \Theta}v \rangle\]
	by equation (\ref{eqn-gauge-transformation}), where $\Theta = \diag(\theta_1,\theta_2, \cdots, \theta_n)$ so $e^{i\Theta} = \diag(e^{i\theta_1}, e^{i \theta_2},
	\cdots, e^{i \theta_n})$ (which was formerly denoted $e^{i\theta}$).  Then
	\begin{equation}\label{eqn-dFdt}
		\frac{d}{dt}F(d(t\theta)) = i \langle \Theta e^{-it\Theta}v, (h-\lambda)e^{-it\Theta}v \rangle - i \langle e^{-it\Theta}v, (h-\lambda)\Theta e^{-it\Theta}v \rangle.\end{equation}
	\[ \frac{d^2}{dt^2}F(d(t\theta))\big\vert_{t=0} =
	- \langle \Theta^2v, (h-\lambda)v \rangle - \langle v, (h-\lambda)\Theta^2v \rangle + 2 \langle \Theta v, (h-\lambda)\Theta v \rangle.\]
	
	The first two terms vanish. Using $M_{v} = \diag(v_1,v_2, \cdots, v_n)$ we get
	\[\langle d\theta, \Hess(F)(d\theta) \rangle=2 \langle \Theta v, (h-\lambda)\Theta v \rangle=2 \langle \theta ,M_{v}^{*} (h-\lambda)M_{v}\theta \rangle,\]
	for all $\theta\in\RR^{n}$. According to \eqref{eqn-real}, $2M_{v}^{*} (h-\lambda)M_{v}$ is real and is therefore equal to $\Hess(F)|d\RR^{n}$ as these are real symmetric matrices with equal quadratic forms. The matrix $M_{v}$ is invertible since $v$ is nowhere-vanishing. Two conclusions follow:
	\begin{enumerate}
		\item Assume $\alpha=d\theta\in V \cap d\RR^n$. Then $M_{v}\theta\in\ker \left(h-\lambda\right)$ and so $M_{v}\theta\propto v$ since $\lambda$ is simple. Then $\theta$ is constant, so $d\theta=0$. We conclude that 
		\begin{equation}\label{eqn-VplusRn}V \oplus d\RR^n = \mathcal A(G).\end{equation} 
		\item $\Hess(F)|d\RR^n$ and $h-\lambda$ have the same number of negative
		eigenvalues, so
		\[\ind(\Hess(F)|d\RR^n)=k-1.\]
	\end{enumerate}

	\subsection{Step 3} For any $ \theta\in\RR^{n} $, the derivative of $ v $ in direction $ d\theta $ is $ v'=i\Theta v $ and $ \Hess(\mu)d\theta=0 $ due to gauge invariance. Let $\partial_{\alpha}h$ stand for the derivative of $h$ in direction $ \alpha\in\mathcal{A}(G) $, so that for any $ \delta=d\theta $ equation (\ref{Hess_calc}) gives
	\begin{equation*}
		\langle \alpha, \Hess(F)d\theta \rangle= -2 \Re\left(\langle v', \partial_{\alpha}h v \rangle \right)=2 \Im\left(\langle\Theta v, \partial_{\alpha}h v \rangle \right). 
	\end{equation*}
	It follows from (\ref{eqn-real}) that $\langle \Theta v, \partial_{\alpha}hv \rangle$ is purely imaginary, so $ \alpha\in V $ if and only if $ \langle\Theta v, \partial_{\alpha}h v \rangle  $ vanish for all real diagonal $ \Theta $. Since $ v $ is nowhere-vanishing, 
	\begin{equation}\label{eqn-hdotv} \alpha \in V \iff \partial_{\alpha}h v = 0.\end{equation}
	Consequently, equation (\ref{Hess_calc}) shows that 
	$\Hess(\mu)$ and $\Hess(F)$ agree on $V$.\qed

	\section{Proof of Theorem \ref{thm-linkage}}\label{sec-proofs2}
\subsection{The critical set \texorpdfstring{$F'$}{F1}}	
Recalling the notations of \S \ref{subsec-setting},  the graph $ G $ has $ n+1 $ vertices labeled $ 0,1,\ldots,n $.  The set $ E_{0} $ is the set of edges connected to $ 0 $.  
The graph $ H $ is the induced graph on the non-zero vertices. The torus of perturbations and its tangent space decompose: 
	\[\mathcal{A}(G)=\mathcal{A}(E_{0})\oplus\mathcal{A}(H),\quad \TT(G)=\TT(E_{0})\oplus\TT(H)\]
and $h_{\alpha} = \alpha*h=\left( \begin{smallmatrix} a & b_{\alpha} \\b_{\alpha}^* & D_{\alpha} \end{smallmatrix} \right)$ is an exceptional critical point of $\lambda_k$ with simple eigenvalue $\lambda=\lambda_k(h_{\alpha})$ and real eigenvector $v = (0,v')$ with $v_{0}=0$ and $v_{r}>0$ for $r>0$. As discussed in \S \ref{subsec-setting}, 
$D_{\alpha}$ must be real, so it is a signing of $D$. By replacing $h\in \mathcal {S}(G)$ with a signing of $h$ (if necessary)
 we may assume $D_{\alpha}=D$ and $\alpha \in \mathcal A(E_0)$, so $h_{\alpha}v=\lambda v$ becomes $Dv'=\lambda v'$ and
\[ b_{\alpha}.v' = \sum _{r \in E_0} b_r v_r e^{i \alpha_{0r}}=0.\]

Recall that $F$ denotes the union of the connected
components of the critical set of $\lambda_k$ in $\mathcal M_h$ that contain $[h_{\alpha}]$ and $[\bar h_{\alpha}]$. 
In \S \ref{subsec-diffeomorphism} and  \S \ref{subsec-FFprime} we will prove that $F=F'$ where
	\begin{equation}\label{eqn-Fprime}
	F':=\left\{[h']\in\mathcal M_h : h'v = \lambda v \text{ and }\exists \gamma \in \mathcal A(E_0) \text{ such that }
	h' = \gamma*h_{\alpha}\right\}.
    \end{equation} 
Observe that {\em the set $F'$ is closed under complex conjugation} because $\lambda$ and $v$ are real, and  
$\overline{\gamma *h_{\alpha}} = (-\gamma-2\alpha)*{h}_{\alpha}$
 for any $\gamma \in \mathcal A(E_0)$. 
   Moreover, {\em the set $F'$ consists of critical points of $\lambda_k$}: 
    since $h_{\alpha}$ is critical and $\lambda_k$ is simple,  Theorem \ref{thm-index} part (3) implies $ v_{s}(h_{\alpha})_{rs}v_{r} $ is real for every $ r\sim s $. 
    Since $\lambda=\lambda_{k}(h')$ is simple for any $h' = \gamma*h_{\alpha}\in F'$, then $ v_{s}h'_{rs}v_{r}=v_{s}(h_{\alpha})_{rs}v_{r} $ is real for every $ r\sim s $ (since $v_0 = 0$) hence $[h']$ is also a critical point.    
    
    \subsection{The diffemorphism between \texorpdfstring{$F'$}{F1} and \texorpdfstring{$\Theta_L$}{Th1}}
    \label{subsec-diffeomorphism}
 	Define $ \Phi:\TT(E_{0})*h_{\alpha}\overset{\cong}{\longrightarrow} (S^{1})^{E_{0}} $ by 
\[ \Phi(\gamma*h_{\alpha})_{r}= \begin{cases} e^{i\gamma_{0r}}e^{i\alpha_{0r}} &\text{ if } b_{r}>0 \\  -e^{i\gamma_{0r}}e^{i\alpha_{0r}} &\text{ if } b_{r}<0 \end{cases}\]
for any $\gamma \in \mathcal A(E_0)$.  An element $ h'\in \TT(E_{0})*h_{\alpha}$ satisfies $h'v=\lambda v $ (with the same $ \lambda $ and $ v=(0,v') $) if and only if $ \Phi(h')=(e^{i\theta_{r}})_{r\in E_{0}} $ is a solution to the planar linkage equation $ \sum_{r\in E_{0}}e^{i\theta_{r}}M_{r}=0 $ with $M_{r}:=|b_{r}v_{r}|$.  
The normalizer of $\TT(E_0)*h_{\alpha}$ in the gauge group is the zeroth coordinate $\TT^{(0)}:=\{(x,0,0,\ldots,0)\in\TT^{n+1}\ :\ x\in S^1\}$, cf. equation (\ref{eqn-diag-d}).
Therefore the diffeomorphism $ \Phi $ passes to the quotient,
 \[ \Phi:(\TT(E_{0})*h_{\alpha})\sslash\TT^{(0)}\overset{\cong}{\longrightarrow} (S^{1})^{E_{0}}/S^{1} \] 
with $\Phi (F')=\Theta_L$.
By \cite{Farber} p. 78, the set $ F' $ is a smooth manifold, closed under complex conjugation, and either it is connected or it has two connected components that are exchanged by complex conjugation.  Moreover, it consists of critical points, so $F' \subset F$.  (The reverse inclusion
is proven in \S \ref{subsec-FFprime}.)
		
	\subsection{Gauge transformations on \texorpdfstring{$G$}{G}, \texorpdfstring{$H$}{H} and 
 \texorpdfstring{$E_{0}$}{E0}} 
 Decompose the space of functions on $G$, $\RR^{G}=\RR^{n+1}$, into $\RR^{n+1}=\RR^{(0)}\oplus\RR^{H}$, where $\RR^{H}:=\{(0,x)\in\RR^{n+1}\ :\ x\in\RR^{n}\}\cong \RR^{n}$. The coboundary differential on $H$ is denoted
	\begin{equation}\label{eqn-differential H}
		d_{H}: \RR^H \to \mathcal A(H)\subset\mathcal{A}(G);\quad (d_Hf)_{rs} = 
		\begin{cases} f(s) - f(r) &\text{if } r \sim s\ \text{and}\  r,s\ge1 \\ 0 &\text{otherwise.}\end{cases}\end{equation}
%%	Not to be confused with the coboundary differential on $G$ which is $d$. 
 The image $ d_{H}\RR^{H} $ is the projection of $d\RR^{H}$ into $ \mathcal A(H) $ and in fact, for any $f\in\RR^{H}$,
\begin{equation}\label{eqn-dGdH} df	= d_Hf + \sum_{r\in E_{0}
} f(r)J(r,0)\end{equation}
where $J(r,0) \in \mathcal A(E_0)$ is the antisymmetric matrix with $J(r,0)_{r0} = 1$, $J(r,0)_{0r} = -1$ and all other entries are $0$. Let $\textbf{1}_{H}=(0,1,1,\ldots,1)\in\RR^H$ denote the constant vector on $H$ and let $\textbf{1}_{E_{0}}= \sum_{r\in E_{0}
} J(r,0)\in \mathcal A(E_0)$. It is easy to verify that
\begin{equation}\label{eqn-ROne}
	 \mathrm{span}_{\RR}(\textbf{1}_{E_0}) = d\RR^{(0)} = d(\mathrm{span}_{\RR}(\textbf{1}_{H}))= (d\RR^H)\cap \mathcal A(E_0) = (d\RR^{n+1})\cap \mathcal A(E_0).\end{equation}

Since $D$ is properly supported on $H$ and $\lambda=\lambda_k'(D)$ is a simple eigenvalue of $D$ with a nowhere-vanishing eigenvector $v'$, then $H$ is connected. Theorem \ref{thm-index} gives a decomposition
	\begin{equation}\label{eqn-sum1}\mathcal A(H) = V_H \oplus d_{H}\RR^H\end{equation}
 and $V_H$ can be described in terms of directional derivatives, according to \S \ref{subsec-Step2},   
	\begin{equation}\label{eqn-VH}
		V_H = \{\alpha_{H}\in\mathcal{A}(H)\ :\ \left(\partial_{\alpha_{H}}h_{\alpha}\right)v=0 \},\qquad \partial_{\alpha_{H}}h_{\alpha}:=\frac{d}{dt}\left(t\alpha_{H}*h_{\alpha}\right)|_{t=0}.
	\end{equation}
Let $\mathcal A_0(E_0)$ denote the orthogonal complement to $\mathrm{span}_{\RR}(\textbf{1}_{E_0})$, 
\[\mathcal{A}_{0}(E_{0}):=\left\{\gamma\in \mathcal A(E_{0}): \sum_{r \in E_0}\gamma_{0r}=0\right\}.\]
Let $\pi: \mathcal{A}(G) \to \mathcal{A}(G)/d\RR^{n+1} \cong H^1(G;\RR)$ 
denote the quotient by $d\Omega^0(G) = d\RR^{n+1}$, cf. equation (\ref{eqn-differential}).

\begin{lem}\label{lem-tangent space decomposition}
The space $ \mathcal{A}(G) $ decomposes as a direct sum
\begin{equation}\label{eqn-AVR-decomp}\mathcal{A}(G)=\mathcal A_{0}(E_{0})\oplus V_{H}\oplus d\RR^{n+1}.\end{equation}
In particular, $ \mathcal A_0(E_0) \oplus V_H\overset{\cong}{\longrightarrow}\pi(\mathcal{A}(G))  $ and
$\mathcal A_0(E_0) \overset{\cong}{\longrightarrow} \pi(\mathcal A(E_0))$.
\end{lem}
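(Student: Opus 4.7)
The plan is to prove (\ref{eqn-AVR-decomp}) by combining directness of the sum with a dimension count, and then to read off the two isomorphism statements as easy corollaries. For the dimensions: since the induced graph $H$ is connected (this follows from $\lambda$ being a simple eigenvalue of $D$ with a nowhere-vanishing eigenvector $v'$, as noted in the text preceding (\ref{eqn-sum1})), we have $\dim d_H\RR^H = n-1$ and thus $\dim V_H = |E_H| - (n-1)$ by (\ref{eqn-sum1}). Connectedness of $G$ gives $\dim d\RR^{n+1} = n$, while $\dim \mathcal{A}_0(E_0) = |E_0|-1$ by construction; summing yields $|E_0|+|E_H| = \dim\mathcal{A}(G)$, so once directness is established, equality follows automatically.

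Next I would verify directness. Suppose $\alpha_0+\alpha_H+df=0$ with $\alpha_0 \in \mathcal{A}_0(E_0)$, $\alpha_H \in V_H$, and $f\in\RR^{n+1}$. Using the edge-support splitting $\mathcal{A}(G) = \mathcal{A}(E_0)\oplus\mathcal{A}(H)$, the $H$-component of the equation reads $\alpha_H = -(df)|_H = -d_H(f|_H)$, which by (\ref{eqn-sum1}) forces $\alpha_H = 0$ and $f|_H$ to be constant, say $f(r) = c$ for every $r\ge 1$. The $E_0$-component then reads $\alpha_0 = -(df)|_{E_0}$, whose $(0,r)$-entry equals the constant $c-f(0)$ for every $r\in E_0$; thus $(df)|_{E_0} = (c-f(0))\textbf{1}_{E_0} \in \mathrm{span}_\RR(\textbf{1}_{E_0})$, which meets $\mathcal{A}_0(E_0)$ only at zero. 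Hence $\alpha_0 = 0$ and $c = f(0)$, so $df = 0$ as well.

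For the corollaries: since $\ker \pi = d\RR^{n+1}$ and $\mathcal{A}_0(E_0)\oplus V_H$ is a linear complement, the restriction $\pi|_{\mathcal{A}_0(E_0)\oplus V_H}$ is the claimed isomorphism onto $\pi(\mathcal{A}(G))$. For $\pi(\mathcal{A}(E_0)) \cong \mathcal{A}_0(E_0)$, one computes $\pi(\mathcal{A}(E_0)) = \mathcal{A}(E_0)/(\mathcal{A}(E_0)\cap d\RR^{n+1}) = \mathcal{A}(E_0)/\mathrm{span}_\RR(\textbf{1}_{E_0}) \cong \mathcal{A}_0(E_0)$ using (\ref{eqn-ROne}). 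The only substantive content is the use of (\ref{eqn-sum1}) and (\ref{eqn-ROne})---both already in hand---so there is no real obstacle; the argument amounts to bookkeeping in the $E_0/H$ support splitting, with the single observation that when $f|_H$ is constant the $E_0$-restriction of $df$ automatically lies in $\mathrm{span}_\RR(\textbf{1}_{E_0})$.
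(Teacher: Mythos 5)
Your proof is correct, and it uses the same essential ingredient as the paper's proof, namely the dimension count $(|E_0|-1) + \beta_H + n = |E_G|$. The difference is which half of the direct-sum equality you verify by hand: you establish directness directly (by taking an arbitrary triple $\alpha_0 + \alpha_H + df = 0$, projecting onto the $E_0$/$H$ support components, and invoking (\ref{eqn-sum1}) and (\ref{eqn-ROne}) to force each term to zero), and then the dimension count gives spanning for free. The paper does the mirror image: it shows $\mathcal{A}_0(E_0) + V_H + d\RR^{n+1}$ spans $\mathcal{A}(G)$ via the chain of inclusions $\mathcal{A}_0(E_0) + V_H + d\RR^{n+1} = \mathcal{A}(E_0) + V_H + d\RR^{n+1} \supset \mathcal{A}(E_0) + (V_H + d_H\RR^H) = \mathcal{A}(G)$ (using (\ref{eqn-dGdH}) to absorb $d_H\RR^H$), and then the dimension count gives directness. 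The two arguments are essentially dual and of comparable length; your version is a bit more explicit about what happens componentwise, and it correctly records the two places where connectivity enters ($H$ connected to force $f|_H$ constant and to compute $\beta_H$, $G$ connected for $\dim d\RR^{n+1} = n$). Your derivation of the two isomorphism statements from the direct-sum decomposition and (\ref{eqn-ROne}) is also correct.
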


\begin{proof}
It follows from (\ref{eqn-dGdH}) that $d_H\RR^n \subset d\RR^{n+1} + \mathcal A(E_0)$.  Using (\ref{eqn-ROne}) and \eqref{eqn-sum1},
\[ \mathcal A_0(E_0) + V_H + d\RR^{n+1} = \mathcal A(E_0) + V_H + d\RR^{n+1} \supset \mathcal A(E_0) + (V_H + d_H\RR^n) = \mathcal{A}(G)\]
so $\mathcal{A}(G)$ is spanned by the  sum on the left side.  On the other hand, the sum on the left hand side is a direct sum
because the sum of the dimensions of the vector spaces is
\[ (|E_0|-1) + (\beta_H) + n = (|E_0|-1) + (|E_H| - n +1) + n = |E_G| = \dim(\mathcal{A}(G)).\qedhere\]
\end{proof}

\subsection{The tangent space to \texorpdfstring{$F'$}{F1}}
	 Consider the preimage of $ F' $ in $ \TT(E_{0})*h_{\alpha} $,
	\begin{align*}
		\widehat{F}':= & \{\gamma*h_{\alpha} : \gamma\in\TT(E_{0}),\ [\gamma*h_{\alpha}]\in F'\}\\
		= & \left\{\gamma*h_{\alpha} : \gamma\in\TT(E_{0}) \text{ and } \sum_{r \in E_0}e^{i\gamma_{0r}}(b_{\alpha})_{r}v_{r}=0\right\}.
	\end{align*}
Differentiate and use the  identification  $ T_{h_{\alpha}}\TT_{h} = \mathcal{A}(G) $ 
to obtain the tangent space
\begin{equation}\label{eqn-tangent to F hat}
	T_{h_{\alpha}}\widehat{F}'\cong\left\{\gamma\in\mathcal{A}(E_{0}) : \sum_{r \in E_0}\gamma_{0r}(b_{\alpha})_{r}v_{r}=0\right\} \subset \mathcal A(E_0).
\end{equation}
By Lemma \ref{lem-tangent space decomposition} the quotient projection $\pi$ takes  $\mathcal L:=T_{h_{\alpha}}\widehat{F}' \cap \mathcal A_0(E_0)$ isomorphically to
 $T_{[h_{\alpha}]}F'$, that is,
\begin{equation}\label{eqn-L}
	\mathcal{L}=\left\{\gamma\in\mathcal{A}_{0}(E_{0}) : \sum_{r \in E_0}\gamma_{0r}(b_{\alpha})_{r}v_{r}=0\right\}
	\cong \pi(\mathcal{L})=T_{[h_{\alpha}]}F'.
\end{equation}

	Let $ \RR_{0}^{E_{0}} $ be the space of mean zero elements of $ \RR^{E_{0}} $, which we identify with $\mathcal A_0(E_0)$. 
Let $ \textbf{x}\in\RR_{0}^{E_{0}} $ and $ \textbf{y}\in \RR_{0}^{E_{0}} $ such that $ (b_{\alpha})_{r}v_{r}=\textbf{x}_{r}+i\textbf{y}_{r} $ for all $ r\in E_{0} $. Then 
	\[\mathcal{L}=\{\gamma\in\mathcal{A}_{0}(E_{0}) : \gamma\cdot\textbf{x}=0\text{ and } \gamma\cdot\textbf{y}=0\}.\]

\subsection {The Hessian of \texorpdfstring{$\lambda$}{la}}  
Since $d\RR^{n+1}$ acts by gauge transformations, the quadratic form $\Hess(\lambda_k)$ 
at the critical point $h_{\alpha}$, expressed with respect to the decomposition 
(\ref{eqn-AVR-decomp}) has the following form,
\[\Hess \lambda=\left(\begin{matrix}
		A & C & 0\\
		C^* & B & 0\\
		  0 & 0 & 0.
	\end{matrix}\right).\]
We will show that $C=0$ and $\det(B) \ne 0$ with $\mathrm{ind}(B) = \Phi(D,k') - (k'-1)$.

Using the notation $ \partial_{\gamma} $ for the directional derivative in direction $ \gamma $ and $ \partial^{2}_{\delta,\gamma} $ for the second derivative in direction $ \gamma $ and then in direction $ \delta $, equation \eqref{Hess_calc} states that
\[\langle\gamma, \Hess \lambda \delta \rangle=2\Re[\langle\partial_\gamma v, (\partial_{\delta}h_{\alpha})v\rangle]+\langle v, (\partial^{2}_{\gamma,\delta}h_{\alpha})v\rangle.\] 
If $ \gamma $ is supported on $ E_{0} $ and $ \delta $ is supported on $ H $, then $ \partial^{2}_{\gamma,\delta}h_{\alpha}=0 $. If $ \delta\in V_{H} $, then $ (\partial_{\delta}h_{\alpha})v=0 $ according to \eqref{eqn-VH}. We conclude that $ \langle\gamma, \Hess \lambda \delta \rangle=0 $ when $ \delta\in V_{h} $ and $ \gamma\in\mathcal{A}_{0}(E_{0}) $. Namely, $ C=0 $.

Now consider the block $ B= \Hess \lambda|_{V_{H}}$.  Let $ \Hess F $ be the Hessian of the function $ F(\delta):=\langle v', (\delta*D) v'\rangle $ for $ \delta\in \TT(H) $ evaluated at $ \delta=0 $. Since $ D $ is real, then it is a critical point of $ \lambda_{k'} $. Since $ \lambda=\lambda_{k'}(D) $ is simple with a nowhere-vanishing eigenvector $ v'$, Theorem \ref{thm-index} implies  the restriction $ \Hess F|_{V_{H}} $ is non-degenerate and has $ \Phi(D,k')-(k'-1) $ negative eigenvalues.
Suppose $ \gamma,\delta\in V_{H} $, then $(\partial_{\delta}h_{\alpha})v=0$ and $ (\partial_{\gamma}h_{\alpha})v=0 $ due to \eqref{eqn-VH}, and equation \eqref{Hess_calc} gives 
\[\langle\gamma,B\delta\rangle=\langle v, (\partial^{2}_{\gamma,\delta}h_{\alpha})v\rangle=\langle\gamma,\Hess F\delta\rangle.\]
Therefore $B=\Hess F|_{V_{H}}$. That is,
\begin{equation}\label{eqn-B}
	\det(B)\ne0\text{ and }\mathrm{ind}(B)=\Phi(D,k')-(k'-1).
\end{equation}

\subsection{ The block \texorpdfstring{$ A= \Hess \lambda|_{\mathcal{A}_{0}(E_{0})}$}{A}}	
In this case, for $ \gamma,\delta\in\mathcal{A}_{0}(E_{0}) $ the matrix $ \partial^{2}_{\gamma,\delta}h_{\alpha}$ is supported on $ E_{0} $ so the second term in equation \eqref{Hess_calc} vanishes and we get 
\[\langle\gamma, \Hess \lambda \delta \rangle=2\Re[\langle\partial_\gamma v, (\partial_{\delta}h_{\alpha})v\rangle].\] 
The vector $ (\partial_{\delta}h_{\alpha})v $ is only non-zero at the first coordinate,
\[((\partial_{\delta}h_{\alpha})v)_{0}=i\sum_{r \in E_0}\delta_{0r}(b_{\alpha})_{r}v_{r}=i\delta\cdot\textbf{x}-\delta\cdot\textbf{y}.\]
To calculate $ \partial_\gamma v $,  use \eqref{eqn-first-derivative}: which states
\begin{equation}\label{eqn-first-derivative2}
	(h_{\alpha}-\lambda I)\partial_\gamma v=-(\partial_\gamma h_{\alpha})v.
\end{equation} 
Let $ \psi_{j} $ for $ j=1,2,\ldots,n+1 $ be a choice of orthonormal eigenvectors of $ h_{\alpha} $ corresponding to the ordered eigenvalues. The Moore-Penrose Pseudo-inverse of $ (h_{\alpha}-\lambda I) $ is the matrix 
\[(h_{\alpha}-\lambda I)^{+}:=\sum_{j\ :\ \lambda_{j}(h_{\alpha})\ne\lambda}\frac{1}{\lambda_{j}(h_{\alpha})-\lambda}\psi_{j}\psi_{j}^{*}=\sum_{j\ne k}\frac{1}{\lambda_{j}(h_{\alpha})-\lambda}\psi_{j}\psi_{j}^{*},\]
where in the last equality we used that $ \lambda=\lambda_{k}(h_{\alpha}) $ is simple. By left multiplying \eqref{eqn-first-derivative2} with the matrix $ (h_{\alpha}-\lambda I)^{+} $ (whose kernel is spanned by $ v $) we get 
\[\partial_\gamma v=-(h_{\alpha}-\lambda I)^{+}(\partial_\gamma h_{\alpha})v+\tilde{c}v,\]
for some constant $ \tilde{c} $. Having $ v_{0}=0 $ yields
\begin{align*}
	\langle\partial_\gamma v, (\partial_{\delta}h_{\alpha})v\rangle= & -\overline{(h_{\alpha}-\lambda I)_{00}^{+}\left((\partial_\gamma h_{\alpha})v\right)_{0}}\left((\partial_\delta h_{\alpha})v\right)_{0}\\
	=& c({h_{\alpha}}) \overline{(i\gamma\cdot\textbf{x}-\gamma\cdot\textbf{y})}(i\delta\cdot\textbf{x}-\delta\cdot\textbf{y}), 
\end{align*}
where from (\ref{eqn-ch}),  
\[c({h_{\alpha}})=\sum_{j\ne k}\frac{|\psi_{j}(0)|^{2}}{\lambda-\lambda_{j}(h_{\alpha})}=-(h_{\alpha}-\lambda I)^{+}_{00}.\]
We conclude that
\[\langle\gamma,A\delta\rangle=2c({h_{\alpha}})\left((\gamma\cdot\textbf{x})^{2}+(\delta\cdot\textbf{y})^{2}\right)=\langle\gamma,2c({h_{\alpha}})(\textbf{x}\textbf{x}^{*}+\textbf{y}\textbf{y}^{*})\delta\rangle.\]
Since $ c({h_{\alpha}})\ne0 $ by assumption, then $ A $ has rank two over $ \mathcal{A}_{0}(E_{0}) $. In particular,
\[\ker(A)=\mathcal{L},\quad\text{and}\quad\mathrm{ind}(A)=\begin{cases}
	2 & \text{if}\ c({h_{\alpha}})<0\\
	0 & \text{if}\ c({h_{\alpha}})>0
\end{cases}.\]
Since $ \Hess \lambda|_{\mathcal{A}_{0}(E_{0})\oplus V_{H}}=A\oplus B $, then we conclude that $\ker(\Hess \lambda)=\mathcal{L}\oplus d\RR^{n+1}$ and
\[\mathrm{ind}(\Hess \lambda)=\phi(D_{\alpha},k')-(k'-1)+\begin{cases}
	2 & \text{if}\ c({h_{\alpha}})<0\\
	0 & \text{if}\ c({h_{\alpha}})>0
\end{cases}.\] 

\subsection{\texorpdfstring{$F$}{F} is Morse-Bott} \label{subsec-FFprime}

Recall that the submanifold of critical points $F'\subset\mathcal{M}_{h}$ is a \emph{Morse-Bott} critical submanifold of $\lambda_k$ if at every point $[h']\in F'$, the kernel of $\Hess \lambda_k([h'])$ is exactly the tangent space $T_{[h']}F'$ and the number of negative eigenvalues of $\Hess \lambda_k([h'])$ is constant for all $[h']\in F'$. 
By (\ref{eqn-L}) the kernel condition holds at $[h_{\alpha}]$.  Since $c(h_{\alpha})$ is nonzero and
continuous it does not change sign,  so the Morse-Bott condition holds at every point  $[h']\in F'$.  
 As the kernel of the Hessian at a point $[h']\in F'\subset F$ is $ T_{[h']}F'$ and $\lambda_k$ is constant on $F$, the tangent spaces agree, $ T_{[h']}F'= T_{[h']}F$. 
 Since $F'$ is closed it is a union of connected components of $F$.  It  contains both $[h_{\alpha}]$ and its complex conjugate, so $F=F'$ is Morse-Bott
 and   
\[\mathrm{ind}(F)=\phi(D_{\alpha},k')-(k'-1)+\begin{cases}
	2 & \text{if}\ c({h_{\alpha}})<0\\
	0 & \text{if}\ c({h_{\alpha}})>0
\end{cases}.\qed\]

	\section{Transversality to the strata of \texorpdfstring{$\mathcal H_n$}{Hn}}\label{sec-stratification}
\subsection{The strata} \label{subsec-SmRm}
The vector space $\mathcal H_n$ of Hermitian $n \times n$ matrices is stratified according to the multiplicities of the eigenvalues, as described in \cite{Arnold2}. (See also \cite{Arnold1,  Agrachev, Shapiro}.)
Suppose $h \in \mathcal H_n$ has $k$ {\em distinct} eigenvalues $\mu_1 < \mu_2 < \cdots <\mu_k$.  
Specifying a multiplicity $r(i)$ for the eigenvalue $\mu_i$ determines a stratum $T(r)$, 
consisting of Hermitian matrices with eigenvalues $\mu_i$ and multiplicities $r(i)$.
The multiplicity vector $r$ is an ordered partition of $n$, meaning that $n = \sum_{i=1}^k r(i)$, and every
ordered partition of $n$ determines a stratum. The set of possible eigenvalues for $h$ forms an open set
\[ \RR^k_{<} = \left\{ x \in \RR^k:\ x_1 < x_2 < \cdots < x_k\right\}\]
in $\RR^k$.  The eigenspaces $V_i$ determine a partial flag $V_1 \subset V_1 \oplus V_2 \subset \cdots \subset \CC^n$.
Therefore the stratum $T(r)$ may be canonically identified with the product
\[ P(r)=\mathcal{F}l(r) \times \RR^k_{<}\]
where $\mathcal Fl(r)$ denotes the partial flag manifold of subspaces 
$0 \subset W_1\subset W_2 \subset \cdots \subset \CC^n$ with $\dim(W_k) = \sum_{i=1}^kr(i)$.
This identification endows the stratum $T(r)$ with the canonical structure of an analytic manifold, 
and each eigenvalue $\mu_i:T(r) \to \RR$ is an analytic function.

It is well known \cite{Wikipedia} that $\mathcal Fl(r)$ is isomorphic to the
quotient $U(n)/\prod_{i=1}^k U(r(i))$ of unitary groups, 
so it has dimension $n^2-\sum_{i=1}^k r(i)^2$ from which it
follows that the stratum $T(r)$ has codimension $\sum_{i=1}^k(r(i)^2-1)$ in $\mathcal H_n$.

\subsection{The manifold \texorpdfstring{$S_m(k)$}{Smk}}

For any $h \in \mathcal H_n$ we may label the eigenvalues 
$\lambda_1\le \lambda_2 \le \cdots \le \lambda_n$.    Fix $m, k$. Let 
$V_k(h) = \ker(h-\lambda_k(h).I)$ be the eigenspace with eigenvalue $\lambda_k$.  
Define\footnote{We are grateful to the referee for pointing out an error in our earlier definition of $S_m(k)$}
\begin{equation}\label{eqn-Sm}
S_m(k) = \left\{h \in \mathcal H_n:\ \dim(V_k(h))=m \text{ and }\lambda_{k-1}(h)<\lambda_k(h)  \right\}.\end{equation}
Each $h \in S_m(k)$ has exactly $k-1$ eigenvalues less than $\lambda_k$ and $n-m-k+1$ eigenvalues 
greater than $\lambda_k$.  It is foliated with leaves indexed by $\lambda \in \RR$,
\begin{equation}\label{eqn-Rm} S_m(k,\lambda) = \left\{ h \in S_m(k): \lambda_k(h) =\lambda \right\}. \end{equation}

\begin{lem}\label{lem-Smk}  The set $S_m(k)$ (resp. $S_m(k,\lambda)$) is an analytic manifold of codimension $m^2-1$ 
(resp. codimension $m^2$) in $\mathcal H_n$. The eigenvalue $\lambda_k:S_m(k) \to \RR$ is analytic.
If $r$ is an ordered partition of $n$ then stratum $T(r)$ is the transverse intersection
 \begin{equation}\label{eqn-T(r)}
 S_{r(1)}(1) \cap S_{r(2)}(1+r(1)) \cap S_{r(3)}(1+r(1)+r(2)) \cap \cdots \cap S_{r(k)}(n-r(k)+1).\end{equation}
\end{lem}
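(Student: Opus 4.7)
The plan is to apply spectral perturbation theory via Riesz projections to reduce the global eigenvalue conditions defining $S_m(k)$ and $T(r)$ to conditions involving a finite-dimensional analytic map. First I would fix $h_0 \in S_m(k)$ with $\lambda_0 := \lambda_k(h_0)$ and choose a small loop $\gamma \subset \CC$ around $\lambda_0$ enclosing no other eigenvalue of $h_0$. On a small neighborhood $U \subset \mathcal H_n$ of $h_0$ I would form the Riesz projection $P(h) = \tfrac{1}{2\pi i}\oint_\gamma (zI-h)^{-1}\,dz$, which is real-analytic in $h$ and has constant rank $m$ on $U$. Continuity of eigenvalues then guarantees that the $m$ eigenvalues of $h$ enclosed by $\gamma$ are exactly $\lambda_k(h),\dots,\lambda_{k+m-1}(h)$, and that $\lambda_{k-1}(h)<\lambda_k(h)$ throughout $U$.

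Next I would trivialize the range of $P(h)$ analytically (for instance by projecting $\mathrm{im}\,P(h_0)$ isomorphically onto $\mathrm{im}\,P(h)$, or via Kato transport) so that $P(h) h P(h)$ becomes an analytic map $\tilde B:U\to\mathcal H_m$. The key calculation needed is that $\tilde B$ is a submersion at $h_0$: for any $A\in\mathcal H_m$, the block-diagonal perturbation $\dot h=\iota A\iota^*$ (with $\iota:\mathrm{im}\,P(h_0)\hookrightarrow\CC^n$) preserves the spectral decomposition of $h_0$, so a residue computation using the resolvent identity gives $dP(h_0)\cdot\dot h=0$ and hence $d\tilde B(h_0)\cdot\dot h=P(h_0)\dot h P(h_0)=A$. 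Since $S_m(k)\cap U=\tilde B^{-1}(\RR\cdot I_m)$ and $S_m(k,\lambda)\cap U=\tilde B^{-1}(\lambda I_m)$, and since $\RR\cdot I_m$ and $\{\lambda I_m\}$ have codimensions $m^2-1$ and $m^2$ respectively in $\mathcal H_m$, the preimage theorem will deliver the claimed analytic submanifold structure and codimensions. Analyticity of $\lambda_k$ on $S_m(k)$ then follows from $\lambda_k(h)=\tfrac{1}{m}\mathrm{tr}\,\tilde B(h)$.

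For the last assertion I would repeat the construction at a fixed $h\in T(r)$, using disjoint loops $\gamma_1,\dots,\gamma_k$ around the distinct eigenvalues $\mu_1<\cdots<\mu_k$, producing analytic maps $\tilde B_i:U\to\mathcal H_{r(i)}$ with $\tilde B_i(h)=\mu_i I_{r(i)}$. The same block-diagonal trick, using a perturbation $\dot h=\sum_i\iota_i A_i\iota_i^*$ supported diagonally on the eigenspaces $V_i$, shows that the joint differential $(d\tilde B_1,\dots,d\tilde B_k)(h)$ is surjective onto $\prod_i\mathcal H_{r(i)}$. This surjectivity is precisely transversality of the factors $S_{r(i)}(k_i)$ at $h$, with $k_i=1+r(1)+\cdots+r(i-1)$, and their intersection $\bigcap_i\tilde B_i^{-1}(\RR\cdot I_{r(i)})$ is then an analytic submanifold of codimension $\sum_i(r(i)^2-1)$, matching the codimension of $T(r)$ recorded in \S\ref{subsec-SmRm}. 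Since $T(r)$ is contained in the intersection and has the same dimension, local equality follows.

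The main technical obstacle will be verifying $dP(h_0)\cdot\dot h=0$ for block-diagonal perturbations, as this is the linchpin of the submersion argument. I expect this to follow from the identity $\tfrac{d}{dt}(zI-h(t))^{-1}=(zI-h(t))^{-1}\dot h(zI-h(t))^{-1}$: expanding in the spectral basis of $h_0$, the only nonzero contributions to $dP(h_0)\cdot\dot h$ involve off-diagonal blocks $P_i\dot h P_l$ with $i\neq l$, and these vanish by the block-diagonal assumption on $\dot h$. Once this is in hand, the rest of the argument is routine application of the preimage theorem.
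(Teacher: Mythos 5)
Your argument is correct, but it is a genuinely different route from the paper's. The paper \emph{parametrizes} $S_m(k)$ explicitly: it builds a double fibration over $G_m(\CC^n)\times\RR$ with intermediate fiber $G_{k-1}(V^\perp)$ and top fiber $\mathcal H(W_-)^{<t}\times\mathcal H(W_+)^{>t}$, then reads off the manifold structure, the codimension (by summing fiber dimensions), and the analyticity of $\lambda_k$ (it is literally the coordinate $t$ in the parametrization). You instead \emph{cut out} $S_m(k)$ locally as a preimage: near $h_0$ you build the Riesz projection $P(h)$, trivialize its range via Kato transport to obtain an analytic map $\tilde B:U\to\mathcal H_m$, check that $\tilde B$ is a submersion by the residue computation $dP(h_0)\cdot\dot h=0$ for block-diagonal $\dot h$, and apply the preimage theorem with $S_m(k)\cap U=\tilde B^{-1}(\RR\cdot I_m)$; analyticity of $\lambda_k$ comes out as $\tfrac1m\mathrm{tr}\,\tilde B$. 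Both approaches are standard and sound, and the dimension bookkeeping agrees. Where your approach arguably buys something is the transversality claim for $T(r)$: the paper dispatches this with the remark that ``the different factors involve independent conditions,'' whereas you make it precise by exhibiting the joint map $(\tilde B_1,\dots,\tilde B_k)$ as a submersion onto $\prod_i\mathcal H_{r(i)}$ via block-diagonal perturbations supported on the individual eigenspaces $V_i$. Where the paper's approach buys something is globality and transparency: it describes the whole manifold $S_m(k)$ at once rather than patching local submersion charts, and the identification of $\lambda_k$ with the coordinate $t$ makes its analyticity immediate. One small remark on your write-up: you actually have set-theoretic equality $T(r)=\bigcap_i S_{r(i)}(k_i)$, not merely containment plus a dimension count, since the conditions $\dim V_{k_i}=r(i)$ and $\lambda_{k_i-1}<\lambda_{k_i}$ for all $i$ force the eigenvalue multiplicity pattern inductively; also, when trivializing $\mathrm{im}\,P(h)$, you should use the unitary Kato transport rather than an arbitrary linear isomorphism, so that $\tilde B(h)$ stays Hermitian.
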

\begin{proof}  If $h \in S_m(k)$ then the eigenspace $V = V_k(h)$ is an element of the Grassmann manifold
$G_m(\CC^n)$ of $m$-dimensional subspaces of $\CC^n$. Set $t = \lambda_k(h)$.  The restriction $h|V^{\perp}$
determines an orthogonal decomposition
 $V_k(h)^{\perp} = W_- \oplus W_+$ as the sum of the $<t$ (resp. $>t$) eigenspaces of
$h|V_k(h)^{\perp}$ of dimension $k-1$ and dimension $n-m-k+1$ respectively.  The further restriction
$h|W_-$ lies in the set $\mathcal H(W_-)^{<t}$ of Hermitian operators all of whose eigenvalues are $<t$,
and similarly for $h|W_+$.  Therefore we have parametrized $S_m(k)$ by a double fibration
\begin{diagram}[size=2em]
\mathcal H(W_-)^{<t} \times &\mathcal H(W_+)^{>t} &  & \rInto &  & S_m(k) \\
&\dTo    &&  && \dTo \\
&W_-  & \in\ \ & G_{k-1}(V^{\perp}) &\rInto& E\\
&&& \dTo && \dTo\\
&&& (V,t) & \in\ & G_m(\CC^n) \times \RR
\end{diagram}
where $E$ is the bundle whose fiber over $(V,t)$ is the Grassmannian of $k-1$ dimensional complex subspaces  
$W_-\subset V^{\perp}$.  From this we see that $S_m(k)$ is an analytic manifold and $t = \lambda_k$ is
an analytic function on $S_m(k)$, as a coordinate in the parametrization.

The dimension of $S_m(k)$ may be calculated from the above diagram, 
\[\dim(S_{m}(k))= \dim_{\RR}\left(G_m(\CC^n)\right)+\dim_{\RR}\left(G_{k-1}(V^{\perp})\right)+\dim\left(\mathcal H(W_-)^{<t} \times\mathcal H(W_+)^{>t}\right),\]
and a miraculous cancellation of terms
 gives $\mathrm {codim}(S_m(k)) = m^2-1$.  It is a direct consequence of the definitions that
$T(r)$ is the intersection (\ref{eqn-T(r)}).  The intersection is transversal
because the different factors in (\ref{eqn-T(r)}) involve independent conditions.
\end{proof}

\begin{prop}\label{prop-tangent-space}  Fix $h \in \mathcal H_n$ with eigenvalue $\lambda=\lambda_k(h)$ 
and  eigenspace $V=V_{k}$ of dimension $m$.

\begin{enumerate}
\item[(A)]
The tangent space $T_hS_m(k,\lambda) $ (resp. $T_hS_m(k)$) consists of all tangent vectors $\xi \in T_h 
\mathcal H_n = \mathcal H_n$ such that, as sesquilinear forms\footnote{ cf. \S \ref{subsec-Hermitian}}, 
the restriction  $\xi|V = 0$  (resp.~such that $\xi|V$ is a scalar\footnote{ In fact it is multiplication by the 
directional derivative $\partial_{\xi}(\lambda_k)$.}). With respect to the decomposition 
$\CC^n = V_k \oplus V_k^{\perp}$ it is the subspace of matrices
\begin{equation}\label{eqn-BD} \xi = \left( \begin{matrix} 0 & B \\ B^* & D \end{matrix} \right)\quad  \left( \text{resp. } \xi = 
 \left( \begin{matrix} c.I_{V} & B \\ B^* & D \end{matrix} \right) \right)\end{equation}
where $D \in \mathcal H_{n-k}$, $B \in M_{k \times (n-k)}(\CC)$, $c\in\RR$, and $I_{V}$ the identity on $V$.
\item[(B)]
A submanifold $Q\subset \mathcal H_n$ is transverse to $S_m(k,\lambda)$ (resp. $S_m(k)$) at $h \in Q$ if and only if
%\[\{\xi|V\ :\ \xi\inT_{h}Q \}=\mathcal H(V),\]
%and it is transverse to $S_m(\lambda_k)$ at $h \in Q$ if and only if 
%\[\{cI_{V}+\xi|V\ :\ c\in\RR,\ \xi\inT_{h}Q \}=\mathcal H(V).\]
the elements $\xi|V$ (resp.~the elements $c.I_{V}+\xi|V$) account for all the Hermitian operators in $\mathcal H(V)$, as $\xi$ varies within $T_{h}Q$ and $c$ within $\RR$.  
\item[(C)]
 The tangent space  $T_hS_m(k,\lambda)$ (resp. $T_hS_m(k)$)  can also be expressed as the set of all $\xi \in \mathcal H_n$ of
the form
\[ \xi = (h-\lambda.I)U + U^*(h-\lambda.I) \text{ with } U \in M_{n \times n}(\CC) \]
(resp. $\xi = (h-\lambda.I)U + U^*.(h-\lambda.I) + c.I_{V}$ with $U \in M_{n \times n}(\CC)$ and $c \in \RR$).
\end{enumerate}\end{prop}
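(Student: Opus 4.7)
The plan is to establish (A) first via analytic perturbation theory combined with a dimension count, then to deduce (B) as an immediate linear algebra consequence, and finally to verify (C) by explicit block-matrix construction.

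For (A), I would fix $h\in S_m(k,\lambda)$ and let $P$ be the orthogonal projection onto $V=V_k(h)$. Given a smooth curve $h(t)\subset S_m(k,\lambda)$ with $\dot h(0)=\xi$, classical analytic perturbation theory (Kato, Rellich, already cited in the paper) provides a smooth curve of projections $P(t)$ onto the eigenspace $V(t)$ with $P(0)=P$, satisfying $(h(t)-\lambda I)P(t)=0$ identically. Differentiating at $t=0$ and left-multiplying by $P$ gives $P\xi P+P(h-\lambda I)\dot P(0)=0$; the second term vanishes since $P(h-\lambda I)=0$, so $P\xi P=0$, which is exactly $\xi|V=0$. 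The case of $S_m(k)$ is parallel: differentiating $(h(t)-\lambda_k(h(t))I)P(t)=0$ and applying $P$ on the left yields $P\xi P=\dot\lambda_k\cdot P$, i.e.\ $\xi|V=(\partial_\xi\lambda_k)\,I_V$. The reverse inclusion is then free of charge from Lemma \ref{lem-Smk}: the subspace $\{\xi:P\xi P=0\}$ is cut out by the vanishing of a Hermitian $m\times m$ block, so has real codimension $m^2$ in $\mathcal H_n$, matching the codimension of $S_m(k,\lambda)$; the subspace $\{\xi:P\xi P\in\RR\cdot P\}$ has codimension $m^2-1$, matching $S_m(k)$. Since the inclusion of subspaces of equal dimension is an equality, this settles (A).

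Part (B) is then a direct consequence. The linear map $T_h\mathcal H_n\to\mathcal H(V)$, $\xi\mapsto\xi|V$, is surjective (any Hermitian operator on $V$ extends trivially to a Hermitian matrix on $\CC^n$) and, by (A), has kernel $T_hS_m(k,\lambda)$. Therefore the transversality condition $T_hQ+T_hS_m(k,\lambda)=T_h\mathcal H_n$ is equivalent to $\{\xi|V:\xi\in T_hQ\}=\mathcal H(V)$. For $S_m(k)$ the kernel additionally contains $\RR\cdot I_V$, so one must let the scalar $c\in\RR$ vary in addition to $\xi\in T_hQ$, as stated.

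For (C), the inclusion $\{(h-\lambda I)U+U^*(h-\lambda I):U\in M_{n\times n}(\CC)\}\subseteq T_hS_m(k,\lambda)$ is immediate from (A): for $v,w\in V$ the sesquilinear form $\langle v,[(h-\lambda I)U+U^*(h-\lambda I)]w\rangle$ vanishes because $(h-\lambda I)v=0=(h-\lambda I)w$. For the reverse I would use the block decomposition $\CC^n=V\oplus V^\perp$, in which $h-\lambda I=\diag(0,A')$ with $A'$ an invertible Hermitian operator on $V^\perp$, and solve explicitly: given $\xi$ with $\xi|V=0$, the off-diagonal block $B$ is handled by setting $U_{21}=(A')^{-1}B^{\ast}$, and the diagonal block $D$ by solving the Lyapunov-type equation $A'U_{22}+U_{22}^{\ast}A'=D$, which, since $A'$ and $D$ are Hermitian, has the explicit solution $U_{22}=\tfrac12(A')^{-1}D$. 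The $S_m(k)$ version follows by absorbing the scalar $c\cdot I_V$ correction, which can always be realised by a suitable multiple of $I_{n}$ in the construction.

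The main obstacle is the initial perturbation-theoretic step in (A): one must justify that the eigenspace at $\lambda$ can be followed smoothly through the multiple-eigenvalue stratum. This invokes the Kato/Rellich machinery together with the analytic structure of $S_m(k)$ furnished by Lemma \ref{lem-Smk}. Once this is granted, the rest is routine linear algebra and dimension counting.
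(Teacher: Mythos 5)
Your proof is correct and follows essentially the same route as the paper: differentiate along a curve in the stratum to show $\xi|V=0$ (resp.\ scalar), close the argument with the codimension count from Lemma \ref{lem-Smk}, read off (B) as a quotient statement, and verify (C) by block decomposition with $h-\lambda I=\mathrm{diag}(0,A')$. The only cosmetic difference is that you track the eigenprojection $P(t)$ whereas the paper differentiates the eigenvalue equation for a family of eigenvectors, and you give an explicit Lyapunov solution $U_{22}=\tfrac12(A')^{-1}D$ where the paper merely observes surjectivity; both are equivalent.
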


\begin{proof}
Let $\xi$ be a tangent vector to $S_m(k,\lambda)$ at the point $h = h_0$.  Let $h_t \in S_m(k,\lambda)$ 
be a smooth one parameter family with $\xi = \dot h=\frac{d}{dt}h(0)$.  Suppose $u_t \in V \subset \CC^n$
is an eigenvector of $h_t$ with eigenvalue $\lambda$.  Differentiating the eigenvalue equation $hu = \lambda u$
gives $\dot h u + h \dot u = \lambda \dot u$.  Taking the inner product with any $w \in V$ gives 
$\langle w, \xi u \rangle = 0$ which shows that $\xi$ has the form of equation (\ref{eqn-BD}) above.
On the other hand the codimension of the space of matrices (\ref{eqn-BD}) is $m^2$ which equals the
codimension of $S_m(k,\lambda)$ so (\ref{eqn-BD}) describes the full tangent space.  A similar procedure
works for the tanget space to $S_m(k)$.

Part (B) of the proposition is an immediate consequence.

For part (C), using the decomposition $\CC^n = V_k \oplus V_k^{\perp}$, % of equation (\ref{eqn-BD}), 
the matrix of $T = h-\lambda.I$ is 
\begin{equation}\label{eqn-T} h-\lambda.I = \left( \begin{matrix} 0 & 0 \\ 0 & A \end{matrix} \right)\end{equation}
%$\left(\begin{smallmatrix} 0 & 0 \\ 0 & A \end{smallmatrix} \right)$
where $A$ is nonsingular.  Given $U \in M_{n \times n}(\CC)$, we have 
\[U=\left( \begin{matrix} U_{1} & U_{2} \\ U_{3} & U_{4} \end{matrix} \right)\ \Rightarrow\ TU + U^*T=\left( \begin{matrix} 0 & U_{3}^{*}A \\ AU_{3} & AU_{4}+U_{4}^{*}A \end{matrix} \right).\]
Hence, $ TU + U^*T\in T_{h}R_{m}(\lambda)$ as it has the form (\ref{eqn-BD}). Conversely, since $A$ is invertible, $AU_{3}$ and $AU_{4}$ account for all matrices in $ M_{(n-k) \times k}(\CC)$ and $ M_{k \times k}(\CC)$ as $U$ varies within $M_{n \times n}(\CC)$. This proves the case $\xi = \left( \begin{smallmatrix} 0 & B \\ B^*& D \end{smallmatrix} \right)$, and the case of $\xi = \left( \begin{smallmatrix} c.I_{V} & B \\ B^*& D \end{smallmatrix} \right)$
follows.
 \end{proof}
 
A closely related result, concerning transversality with respect to the manifold of matrices with
constant rank, appears in \cite[Chapter 10.5]{Lovasz}.   Although Lov\'asz considers 
 only real symmetric matrices, his proof applies also to Hermitian matrices.

\subsection{Application to graphs}\label{subsec-graphs}
Fix $k, m, n \ge 1$ and let $\lambda\in \RR$.  Recall $S_m(k), S_m(k,\lambda) \subset \mathcal H_n$ from
(\ref{eqn-Sm}), (\ref{eqn-Rm}). 
Let $G$ be a graph on $n$ vertices with a set of edges $E$ and associated spaces $\mathcal S(G) \subset \mathcal H(G) \subset \mathcal H_n$.  
In this section we determine when the inclusion $\mathcal H(G) \to \mathcal H_n$ is transverse to the manifold 
$S_m(k)$ at a point $h$ of their intersection.
The results are used in Corollary \ref{cor-splitting}, the case of multiplicity 2,  to provide sufficient conditions which guarantee that 
the mapping $\TT_h \to \mathcal H_n$ is transverse to $S_2(k)$ locally near $h$.

Let $\mathcal H(\overline{G})$ denote those Hermitian matrices that are supported on the complement of $E$. That is $h\in\mathcal H(\overline{G}) $ if $h^{*}=h$, $h_{rr}=0$, and $h_{rs}=0$ for all $r$ and for any edge $rs$. It is the orthogonal
complement\footnote{$\mathcal H_n$ is equipped with the standard inner product $\langle A,B \rangle:=\mathrm{trace}(A^{*}B)=\mathrm{trace}(AB)$. } in $\mathcal H_n$ to $\mathcal H(G)$.

\begin{prop}\label{prop-Lovasz}
Fix $h \in \mathcal H_n$. Suppose $\lambda_{k-1}(h)<\lambda_{k}(h)$ and the eigenvalue $\lambda:=\lambda_k(h)$ has multiplicity $m$ and eigenspace $V=V_{k}$. The following statements are equivalent.
\begin{enumerate}
    \item
The inclusion $\mathcal H(G) \to \mathcal H_n$ is transverse to $S_m(k)$ at $h$.
\item The inclusion $\mathcal H(G) \to \mathcal H_n$ is  transverse to $S_m(k,\lambda)$ at $h$.
\item $(h-\lambda.I)X \ne 0$
for every nonzero $X \in \mathcal H(\overline{G})$.
\item There exist $\xi_1,\xi_2,\cdots,\xi_N \in \mathcal H(G)$ whose restrictions $\{ \xi_1|V,\xi_2|V,\cdots,\xi_N |V\}$ span (over $\RR$) the space $\mathcal H(V)$ of
Hermitian operators on $V$.
\end{enumerate}
\end{prop}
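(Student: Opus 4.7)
The plan is to establish (1) $\iff$ (2) $\iff$ (4) using the tangent-space characterization of transversality in Proposition \ref{prop-tangent-space}(B), and then to derive (3) $\iff$ (4) by an orthogonal-complement argument based on the trace inner product $\langle A, B\rangle = \mathrm{trace}(AB)$ on $\mathcal{H}_n$.

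For (2) $\iff$ (4): since $\mathcal{H}(G)$ is a linear subspace, its tangent space at $h$ is $\mathcal{H}(G)$ itself, and Proposition \ref{prop-tangent-space}(B) identifies transversality of $\mathcal{H}(G)$ to $S_m(k,\lambda)$ at $h$ with the condition that $\{\xi|V : \xi \in \mathcal{H}(G)\}$ fills out $\mathcal{H}(V)$. This image is a linear subspace, so filling out is the same as spanning, which is (4). For (1) $\iff$ (2): the identity matrix $I_n$ lies in $\mathcal{H}(G)$ (the diagonal is always available) and $I_n|V = I_V$, so $c\, I_V = (c\, I_n)|V$ is already a restriction. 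Consequently, the extra summand $c\, I_V$ appearing in the $S_m(k)$-transversality condition of Proposition \ref{prop-tangent-space}(B) contributes nothing new, and (1) collapses onto (2).

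The substantive step is (3) $\iff$ (4), which I would argue as follows. Given $Y \in \mathcal{H}(V)$, let $\tilde Y \in \mathcal{H}_n$ denote its canonical lift with respect to $\CC^n = V \oplus V^{\perp}$, namely $\tilde Y = \left(\begin{smallmatrix} Y & 0 \\ 0 & 0 \end{smallmatrix}\right)$. A direct block computation gives
\[
\langle \tilde Y, \xi\rangle = \mathrm{trace}(\tilde Y\,\xi) = \mathrm{trace}\bigl(Y \cdot (\xi|V)\bigr)
\]
for every $\xi \in \mathcal{H}_n$. Since $\mathcal{H}(G)$ and $\mathcal{H}(\overline{G})$ are orthogonal complements in $\mathcal{H}_n$ under this inner product, the restriction map $r : \mathcal{H}(G) \to \mathcal{H}(V)$, $\xi \mapsto \xi|V$, fails to be surjective exactly when there is a nonzero $Y \in \mathcal{H}(V)$ with $\tilde Y \in \mathcal{H}(\overline{G})$. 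On the other hand, writing $T := h - \lambda I$ in the block form $\left(\begin{smallmatrix} 0 & 0 \\ 0 & A \end{smallmatrix}\right)$ with $A$ invertible (as in the proof of Proposition \ref{prop-tangent-space}(C)), a one-line multiplication yields $T X = 0$ for $X \in \mathcal{H}_n$ if and only if $X$ has the shape $\tilde Y$ for some $Y \in \mathcal{H}(V)$. Therefore (3) fails precisely when there is a nonzero $X \in \mathcal{H}(\overline{G})$ of this shape, which is exactly the failure of (4).

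I do not anticipate a serious obstacle: once Proposition \ref{prop-tangent-space} is available, the argument is pure linear algebra. The only delicate bookkeeping is the lift $Y \leftrightarrow \tilde Y$, which is what converts the surjectivity of $r$ into the triviality of $\mathcal{H}(\overline{G}) \cap \ker(T \cdot)$ and so identifies conditions (3) and (4). A closely related transversality statement for constant-rank matrices is noted in \cite[\S 10.5]{Lovasz}, which the excerpt explicitly cites as a sanity check.
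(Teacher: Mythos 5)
Your argument is correct and follows essentially the same route as the paper: Proposition \ref{prop-tangent-space}(B) handles $(1)\iff(2)\iff(4)$, and the orthogonal-complement/trace-pairing computation (using $\mathcal H(\overline{G})=\mathcal H(G)^{\perp}$ and the block description of the kernel of left-multiplication by $h-\lambda I$) supplies the link to $(3)$. The only cosmetic difference is that the paper packages this last step as $(2)\iff(3)$ whereas you route it through $(4)\iff(3)$; the underlying linear algebra is identical.
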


\begin{proof}
Let $V$ denote the $m$ dimensional eigenspace of $h$ with eigenvalue $\lambda$.
Parts (1) and (2) are equivalent by Proposition \ref{prop-tangent-space} because the tangent space $T_h\mathcal H(G) = \mathcal H(G)$ 
contains the identity matrix $\xi = I$, and $\xi|V = I_V$.

For part (3), as in \cite{Lovasz} \S 10.5.2, the manifolds $\mathcal H(G)$ and $S_m(k,\lambda)$ 
are transverse at $h$ if and only if the orthogonal complements
%\footnote{Orthogonal
%%complement with respect to the standard inner product $\langle A,B\rangle:=\mathrm{trace}(AB).$} 
of their tangent spaces intersect trivially.  The orthogonal complement to $T_{h}S_m(k,\lambda)$ is 
\[ T_hS_m(k,\lambda)^{\perp} = \left\{ X \in \mathcal H_n: (h-\lambda.I)X = 0 \right\} \]
by equations \eqref{eqn-T} and (\ref{eqn-BD}). The orthogonal complement of $T_{h}\mathcal H(G)=\mathcal H(G)$ 
is $\mathcal H(\overline{G})$, so transversality to $S_m(k,\lambda)$ fails if and only if there exists 
$0 \ne X \in \mathcal H(\overline{G})$  such that $(h-\lambda.I)X = 0$.
  
  Part (4) is a restatement of part (B) of Proposition \ref{prop-tangent-space}.
\end{proof}

 \subsection{Example - Graph splitting}  The following example provides some intuition for the definitions in
 \S \ref{subsec-graph-conditions}.  
 Given graphs $H_1, H_2$ of size $n_{1},n_{2}$ respectively. Suppose $h_{1}\in\mathcal{H}(H_{1})$ and $h_{2}\in\mathcal{H}(H_{2})$ have the same simple eigenvalue $\lambda$.  
 Let $\phi_1,\phi_2$ be corresponding eigenfunctions.  Suppose there is a vertex $v_1$ of $H_1$ with $\phi_1(v_1)=0$
 and a vertex $v_2$ of $H_2$ with $\phi_2(v_2) = 0$.  Let $G$ be the graph of size $n=n_{1}+n_{2}-1$ obtained from joining $H_1$, $H_2$ by identifying the vertices $v_1$ and $v_2$. Define $h\in\mathcal{H}(G)$ such that its restrictions to $H_{1},H_{2}$ agree with $h_{1},h_{2}$.  Then $\lambda$ is an eigenvalue of $h$ with 2 dimensional eigenspace $V$ and eigenfunctions
 $\Phi_1 = \phi_1 \times \{0\}$ and $\Phi_2 = \{0\} \times \phi_2$ such that $\langle \Phi_1, \Phi_2 \rangle = 0$.
For any $\xi\in \mathcal H(G)$ we have $\langle \Phi_1, \xi \Phi_2\rangle = 0$ so the elements $\xi|V$ fail
to account for all quadratic forms on $V$, that is, $\mathcal H(G)$ is not transverse
to $S_2(k)$ at the point $h$. In the graph $G$, both eigenfunctions vanish at the 
vertex $v_1= v_2$ so the graph $G$ is ``split" into two pieces by this eigenvalue  $\lambda$.

\subsection{Graph theoretic conditions}\label{subsec-graph-conditions}
Maintain the notation of \S \ref{subsec-SmRm} and \S \ref{subsec-graphs}.  If $u \in \RR^n$ the {\em support} of $u$ is the set $\mathrm{spt}(u)$
of vertices $j$ such that $u_j \ne 0$. If $V \subset \RR^n$ is a subspace, its support is
\[ \mathrm{spt}(V) = \bigcup_{u \in V} \mathrm{spt}(u).\]
If $\lambda$ is an eigenvalue of $h\in\mathcal{H}(G)$ with eigenspace $V=\ker(h-\lambda I)$, 
we say that {\em the eigenspace $V$ splits $G$} if the induced subgraph $G|\spt(V)$ of $G$ on the vertices in $\mathrm{spt}(V)$ is {\em not} connected. 
Given an edge $(rs)$ we say that {\em $V$ projects surjectively onto $(rs)$} if $\{(u_{r},u_{s})\ :\ u\in V\}=\CC^{2}$. 
\begin{thm}\label{prop-transverse}  Suppose $h \in \mathcal H(G)$ has eigenvalue $\lambda=\lambda_k(h)$ and eigenspace $V$.
\begin{enumerate}[ align=right]
	\item[(A)] Suppose the multiplicity of $\lambda_k$ is 2 and  either 
		\begin{enumerate}
		\item There exists an edge $(rs)$ on which $V$ projects surjectively, or
		\item The eigenspace $V$ does not split $G$.
		\end{enumerate}
Then $\mathcal H(G)$ is transverse to $S_2(k)$ at the point $h$.

	\item[(B)] For arbitrary multiplicity $m$, suppose there exist nonzero vectors $u,v \in V$ whose supports are edge-separated, meaning that ${\mathrm{spt}}(u)\cap {\mathrm{spt}}(v)=\emptyset$ and there are no edges between ${\mathrm{spt}}(u)$ 
and $\mathrm{spt}(v)$. Then
$\mathcal H(G)$ is not transverse to $S_m(k)$ at $h$.
\end{enumerate}
\end{thm}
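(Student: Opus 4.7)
The plan is to translate transversality into the criterion of Proposition \ref{prop-Lovasz}(3): $\mathcal H(G)$ is transverse to $S_m(k)$ at $h$ iff there is no nonzero $X\in\mathcal H(\overline G)$ with $(h-\lambda I)X=0$, equivalently no nonzero Hermitian $X$ whose image lies in $V$ and which vanishes on the ``support of $G$'' (all diagonal entries and all edges). Writing $X=U\eta U^{*}$ with $U$ the inclusion of $V$ and $\eta$ a Hermitian form on $V$, non-transversality is equivalent to the existence of a nonzero $\eta$ with $(U\eta U^{*})_{rs}=0$ for all $r\simeq s$. Both parts reduce to analyzing when such an $\eta$ can exist.

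For part (A), multiplicity is $2$, so $\eta$ is a nonzero $2\times 2$ Hermitian form. Choose an orthonormal basis $e_{1},e_{2}$ of $V$ diagonalizing $\eta$, say $\eta=\mathrm{diag}(a,b)$. The diagonal conditions $a|(e_{1})_{r}|^{2}+b|(e_{2})_{r}|^{2}=0$ force $\eta$ to have rank $2$ with $ab<0$: otherwise some $e_{i}$ would vanish at every vertex. Rescaling the basis to $u=\sqrt{|a|}\,e_{1}$, $v=\sqrt{|b|}\,e_{2}$ gives $X=uu^{*}-vv^{*}$ with $u,v\in V$ nonzero, orthogonal, and satisfying
\[|u_{r}|=|v_{r}|\ \text{for all}\ r,\qquad u_{r}\bar u_{s}=v_{r}\bar v_{s}\ \text{for}\ r\sim s.\]
In case (a), if $V$ surjects onto $\{(w_{r},w_{s}):w\in V\}=\CC^{2}$ then $(u_{r},u_{s})$ and $(v_{r},v_{s})$ are linearly independent in $\CC^{2}$; writing polar decompositions one sees the edge condition forces $v_{r}=e^{-i\gamma}u_{r}$ and $v_{s}=e^{-i\gamma}u_{s}$ for a common phase $\gamma$, contradicting independence (a separate mini-case handles $u_{r}=0$ or $u_{s}=0$). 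In case (b), since $\mathrm{spt}(u)=\mathrm{spt}(v)=\mathrm{spt}(V)$, writing $u_{r}=R_{r}e^{i\alpha_{r}}$ and $v_{r}=R_{r}e^{i\beta_{r}}$ on $\mathrm{spt}(V)$, the edge condition says $\alpha_{r}-\beta_{r}$ is locally constant on $G|\mathrm{spt}(V)$. Connectedness yields a single phase $\gamma$ with $v=e^{-i\gamma}u$, contradicting the orthogonality of $u,v$.

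For part (B), we construct the obstruction directly. Given nonzero $u,v\in V$ with edge-separated supports, set
\[X=uv^{*}+vu^{*}.\]
This matrix is Hermitian and has image in $V$, hence $(h-\lambda I)X=0$. Since $\mathrm{spt}(u)\cap\mathrm{spt}(v)=\varnothing$, all diagonal entries vanish; moreover $X_{rs}=u_{r}\bar v_{s}+v_{r}\bar u_{s}$ can only be nonzero when one of $r,s$ is in $\mathrm{spt}(u)$ and the other is in $\mathrm{spt}(v)$, which is ruled out on edges by the edge-separation hypothesis. Disjoint supports give $\langle u,v\rangle=0$, so $Xu=\|u\|^{2}v\neq 0$ and $X$ is nonzero. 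By Proposition \ref{prop-Lovasz}, $\mathcal H(G)$ fails to be transverse to $S_{m}(k)$ at $h$.

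The main obstacle is the bookkeeping in part (A)(b): one must be sure that $\mathrm{spt}(u)=\mathrm{spt}(v)$ actually coincides with $\mathrm{spt}(V)$, so that connectedness of $G|\mathrm{spt}(V)$ is the right hypothesis to propagate the phase equality; this follows because $V$ is spanned by $u,v$ and $|u_{r}|=|v_{r}|$, so any $w\in V$ is supported inside $\mathrm{spt}(u)$. The remaining arguments are straightforward linear algebra once Proposition \ref{prop-Lovasz} and the diagonalization reduction are in hand.
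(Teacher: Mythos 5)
Your proof is correct. It takes the same two general tools as the paper (Proposition~\ref{prop-Lovasz} and, for part~(B), the obstruction $X=uv^{*}+vu^{*}$), but for part~(A) you work through a genuinely different clause of Proposition~\ref{prop-Lovasz}: the paper verifies the \emph{positive} spanning criterion (part~(4)) directly in case~(a), picking $u,v\in V$ with $(u_{r},u_{s})=(1,0)$ and $(v_{r},v_{s})=(0,1)$ and observing that matrices $\xi$ supported on the edge $(rs)$ realize all of $\mathcal H(V)$; then it reduces case~(b) to case~(a) by showing a non-split $V$ must project surjectively onto some edge (via a generic basis $\{u,v\}$ which is nowhere-vanishing on $\spt(V)$). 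You instead use the \emph{negative} obstruction criterion (part~(3)), diagonalize the would-be obstruction $\eta$ on $V$, deduce from the vanishing diagonal that $X=uu^{*}-vv^{*}$ with $u,v$ orthogonal and $|u_{r}|=|v_{r}|$ everywhere, and derive a contradiction in both~(a) and~(b) without passing through one to prove the other. A pleasant side effect of your route is that the structural constraint $|u_{r}|=|v_{r}|$ forces $\spt(u)=\spt(v)=\spt(V)$ automatically, so the generic-basis step the paper needs in case~(b) simply never arises; the trade-off is the extra diagonalization bookkeeping and the mini-cases when a coordinate vanishes, which the paper's spanning argument sidesteps. Part~(B) is identical in both.
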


\begin{proof}

 Assume there exists an edge $(rs)$ on which $V$ projects surjectively. In this case, we may choose $u$ and $v$ in $V$
 such that $(u_{r},u_{s})=(1,0)$ and $(v_{r},v_{s})=(0,1)$, so that in the (not necessarily orthonormal) basis $\{u,v\}$ of $V$, for any $\xi\in\mathcal{H}(G)$ with $\xi_{ij}=0$ for all $ij\notin\{rr,rs,sr,ss\}$,
  \[\xi|_{V}:=\left(\begin{matrix}
		\langle u,\xi u\rangle & \langle u,\xi v\rangle\\
		\langle v,\xi u\rangle & \langle v,\xi v\rangle
	\end{matrix}\right)=\left(\begin{matrix}
		\xi_{rr} & \xi_{rs}\\
		\xi_{sr} & \xi_{ss}
	\end{matrix}\right).\]		
So these vectors span $\mathcal H(V)$ verifying part (4) of Proposition \ref{prop-tangent-space}.

For condition (b), we will show that if $V$ does not split $G$, then there must be an edge $(rs)$ on which $V$ projects surjectively. Let $H$ be the induced subgraph on $\mathrm{spt}(V)$ and assume it is connected. Choose a generic basis $\{u,v\}$ of $V$, so that $u(r)\ne 0$ and $v(r)\ne0$ for all $r\in H$.  Assume by contradiction that $ V$ does not project surjectively on any edge $(rs)$. That is, $\frac{u(s)}{u(r)}=\frac{v(s)}{v(r)}$ for every edge $(rs)$  in $H$. Fix an initial vertex $r_{0}\in H$. Any other vertex $s\in H$ is connected to $r_{0}$ by a path in $H$, say $(r_{0},r_{1},r_{2},\ldots,r_{m},s)$, and so   
		\[\frac{v(s)}{v(r_{0})}=\frac{v(r_{1})}{v(r_{0})}\cdot\frac{v(r_{2})}{v(r_{1})}\ldots\frac{v(s)}{v(r_{m})}=\frac{u(r_{1})}{u(r_{0})}
		\cdot\frac{u(r_{2})}{u(r_{1})}\ldots\frac{u(s)}{u(r_{m})}=\frac{u(s)}{u(r_{0})}.\]
Thus $u$ and $v$ are linearly dependent.

For Part (B), given $u,v\in V$ as described in Part(B), the matrix $X=uv^{*}+vu^{*}$ is in $\mathcal{H}_{n}(\overline{G})$ and satisfies $(h-\lambda.I)X=0$, which contradicts 
Proposition \ref{prop-Lovasz} part (3).  
\end{proof}
	
	\begin{cor}\label{cor-splitting}
Let $h \in \mathcal S(G)$ be properly supported, and let $\alpha \in \mathcal A(G)$.  
Let $\TT_h=\TT(G)*h\subset \mathcal H(G)$ be the embedded torus.
Suppose the eigenvalue $\lambda_k$ of $h_{\alpha} = \alpha*h$ has multiplicity 2 with eigenspace that does not split $G$.  Then there is a neighborhood $V \subset \mathcal A(G)$ of $\alpha$ 
and a neighborhood $U \subset \mathcal S(G)$ of $h$ such that for a 
generic\footnote{An open, dense and full measure set.} set of $h' \in U$,
 the embedding map $\TT_{h'} {\hookrightarrow} \mathcal H_n$ takes the open subset 
 \[ V*h' = \left\{ \alpha'*h' : \alpha' \in V \right\} \subset \TT_{h'}\] 
 transversally to the stratum $S_2(k)$.
\end{cor}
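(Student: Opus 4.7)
\emph{Approach.} The plan is to establish the result via the parametric transversality theorem applied to the total evaluation map
\[
\Psi: \mathcal S(G) \times \mathcal A(G) \longrightarrow \mathcal H_n, \qquad (h', \alpha') \mapsto \alpha' * h'.
\]
The strategy has three steps: (i) show that $\Psi$ is transverse to $S_2(k)$ at the point $(h, \alpha)$; (ii) shrink to neighborhoods $U \times V$ of $(h, \alpha)$ on which transversality persists; (iii) invoke parametric transversality to conclude that for a generic $h' \in U$ the slice map $\alpha' \mapsto \alpha' * h'$ defined on $V$ is transverse to $S_2(k)$.

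For step (i) I would compute the image of $d\Psi_{(h,\alpha)}$. The partial derivative in the $h'$-direction is $X \mapsto \alpha * X$ for $X \in \mathcal S(G)$; at edge $rs$ this contributes the real line $\RR \cdot e^{i\alpha_{rs}}$, and it spans the diagonal of $\mathcal H(G)$ freely. The partial derivative in the $\alpha'$-direction is $A \mapsto i A \odot h_{\alpha}$ for $A \in \mathcal A(G)$; at edge $rs$ this contributes $\RR \cdot i e^{i\alpha_{rs}} h_{rs}$, and because $h$ is properly supported we have $h_{rs} \ne 0$, so this supplies the orthogonal real direction needed to complete the complex line at edge $rs$. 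Consequently the image of $d\Psi_{(h,\alpha)}$ equals all of $\mathcal H(G)$. The hypothesis that the eigenspace of $\lambda_k(h_\alpha)$ does not split $G$, together with Theorem \ref{prop-transverse}(A)(b), asserts that $\mathcal H(G)$ is transverse to $S_2(k)$ at $h_\alpha$. Therefore $\Psi$ is transverse to $S_2(k)$ at $(h, \alpha)$.

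Steps (ii) and (iii) are then standard. Transversality is an open condition in the source, so after shrinking $U$ and $V$ (with $\overline{V}$ compact) we may assume $\Psi|_{U \times V}$ is transverse to $S_2(k)$. The Thom--Sard--Abraham parametric transversality theorem then supplies a residual, full-measure subset of $h' \in U$ for which the slice $\alpha' \mapsto \alpha' * h'$ from $V$ to $\mathcal H_n$ is transverse to $S_2(k)$. Openness of this set of $h'$ follows from compactness of $\overline{V}$ combined with the pointwise openness of transversality. After further shrinking $U$, every $h' \in U$ remains properly supported, so the slice map is an embedding onto the open subset $V * h' \subset \TT_{h'}$, and transversality of the parametrization is equivalent to transversality of the embedded open subset of the torus.

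\emph{Main obstacle.} The key step is the image computation in (i). Varying $h'$ through $\mathcal S(G)$ only alters the real amplitude of each off-diagonal entry in the fixed direction $e^{i\alpha_{rs}}$ and supplies a single real direction per edge; the orthogonal real direction comes from varying the magnetic potential, and this relies crucially on $h_{rs} \ne 0$, that is, on $h$ being properly supported. Without combining both perturbations the image of $d\Psi$ would not exhaust $\mathcal H(G)$, and Theorem \ref{prop-transverse} could not be applied. Once the image is identified with $\mathcal H(G)$, the remainder of the argument is a routine application of Theorem \ref{prop-transverse} and classical parametric transversality.
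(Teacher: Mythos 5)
Your proposal is correct and follows essentially the same route as the paper: factor the full evaluation map through $\mathcal H(G)$, show it is a submersion onto $\mathcal H(G)$ near $(h,\alpha)$ (you make this explicit via the two partial derivatives per edge, while the paper simply notes that $*$ is a submersion with finite fibers because $h$ is properly supported), invoke Theorem~\ref{prop-transverse}(A)(b) to get transversality of $\mathcal H(G)$ to $S_2(k)$ at $h_\alpha$, and then apply the parametric transversality theorem to extract a generic slice. The only difference is cosmetic: the paper uses its own Transversality Lemma~\ref{lem-transversality} (Appendix~A) where you cite the classical Thom--Sard--Abraham theorem, and the paper shrinks via a fixed $W\subset\mathcal H(G)$ rather than appealing to openness of transversality in the source.
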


\begin{proof}  Consider the composition 
	\[\Phi: \TT(G) \times \mathcal S(G) \overset{*}{\to} \mathcal H(G) 
\overset{j}{\hookrightarrow} \mathcal H_n \]
given by $\Phi(\alpha', h') = \Phi_{h'}(\alpha')= \alpha'*h'$.  The  map $*$ above is surjective, since $h$ is properly supported, with finite fibers, it is an open mapping and a submersion. The non-splitting assumption
implies the embedding map $j$ takes $\mathcal H(G)$ transversally to $S_2(k)$ at the point $h_{\alpha}$ so it takes a
neighborhood $W\subset \mathcal H(G)$ of $h_{\alpha}$ transversally to $S_2(k)$.  Choose the neighborhoods
$V\subset \mathcal A(G)$ and $U \subset \mathcal S(G)$ so that
$V*U \subset W$.  Then $\Phi:V \times U \to \mathcal H_n$ is transverse to $S_2(k)$.   Lemma 
\ref{lem-transversality} implies there exists a dense set of values $h' \in U$ so that
the resulting map $\Phi_{h'}:V \to \mathcal H_n$ is transverse to $S_2(k)$.  But this map is the
composition
\[ V \overset{\cong}{\longrightarrow} V*h' \overset{j}{\longrightarrow}\mathcal H_n.  \qedhere\]
\end{proof}
\subsection{Example - Graphs for which \texorpdfstring{$\TT_{h}$}{Th} is generically transverse to 
\texorpdfstring{$S_{2}$}{S2}} Suppose $G$ is a graph obtained by removing a set of disjoint edges from the complete graph, say $(r_{j},s_{j})$ for $j=1,\ldots,m$ such that 
the vertices $\{r_{1},s_{1},r_{2},s_{2},\ldots\}$ are all distinct. If $h\in\mathcal{H}(G)$ has distinct diagonal elements
(a generic assumption) then the embedded torus $\TT_h$ intersects $S_2(k)$ transversally for every $k$.
To prove this, it suffices by Corollary \ref{cor-splitting} to show, for any $h_{\alpha}\in \TT_h$,
that no multiplicity-two eigenvalue of $h_{\alpha}$ splits $G$.

Assume by contradiction that some $h_{\alpha} \in \TT_h$ has a multiplicity two eigenvalue $\lambda=\lambda_k(h_{\alpha})$
with eigenspace $V$ such that the induced graph $G|\spt(V)$ is disconnected.  By the construction of $G$
this means that $\spt(V)=\{r_{j},s_{j}\}$ for one of the missing edges $(r_{j},s_{j})$. So $\lambda$ is a 
multiplicity-two eigenvalue of the restriction
\[h_{\alpha}|\spt(V)=\begin{pmatrix}
    h_{r_{j},r_{j}} & 0\\ 0 & h_{s_{j},s_{j}}\end{pmatrix}.\]
    This contradicts the assumption that diagonal elements of $h$ are distinct.

\appendix

	\section{Transversality}

 \begin{translem}\label{lem-transversality}
		Let $\Phi:\TT\times  B \to \mathcal H$ be a smooth map between smooth manifolds and suppose this map is transverse to a submanifold 
		$S \subset \mathcal H$.  Then there is a dense set of values $b \in  B$ such that the partial map 
		\[\phi_b: \TT \to \mathcal H\ \text{ given by }\ \phi_b(x) = \Phi(x,b) \]
		is transverse to $S$.  If $\Phi$ is proper and $S \subset \mathcal H$ is closed then this set of values is open in $B$.  If $\Phi, \TT, B, \mathcal H$ and $S$
		are analytic then the set of values $b \in B$ for which transversality of $\phi_b$ fails is a subanalytic subset of $B$ of positive codimension.
	\end{translem}

 \paragraph{Remarks}
 Here, $\TT$ is any finite-dimensional smooth manifold. The symbol $\TT$ is being used to indicate that for our application, $\TT$ is an open subset of the torus $\TT(G)$.
	
  This result says, for example, that two submanifolds of Euclidean space may be made transverse by an arbitrarily small {\em translation}.  The transversality
	lemma is due originally to R. Thom (\cite{Thom}).  The proof described here may be found in (\cite{Guillemin}).

	\begin{proof}
		It suffices to consider the case when $B$ is open in some Euclidean space.
		By assumption, the set $P = \Phi^{-1}(S)$ is a smooth submanifold of $\TT\times B$ and it is easy to check that $b\in B$ is
		a regular value of the projection $\pi:P \to B$ if and only if the partial map $\phi_b:\TT \to \mathcal H$ is transverse to $S$.  But Sard's theorem
		says that the set of non-regular values of $\pi$ has Lebesgue measure zero.
		
		Now assume $S$ is closed and $\Phi$ is proper (i.e., the preimage of a compact set is compact).  To show the set of ``transversal" elements $b \in B$ is open, we show its complement is closed.
		Let $b_i\in B$ be a convergent sequence of points, say $b_i \to b \in B$
		for which there exists points $t_i \in \TT$ such that
		$\phi_{b_i}$ fails to take the tangent space $T_{t_i}\TT$ transversally to $T_{s_i}S$ where $s_i = \Phi(t_i,b_i)$.  Since $\Phi$ is proper, 
		by taking a subsequence if necessary
		we may assume the sequence converge, say $t_i \to t \in \TT$ and therefore $s_i \to s $ for some $s \in \mathcal H$.  
		Since $S$ is closed, we also have $s \in S$.  The failure of transversality is
		a closed condition so  $\phi_b$ fails to take $T_t\TT$ transversally to $T_sS$.      
		
		Finally, if $\Phi, \TT, B, \mathcal H, S$ are analytic then the set of points $(t,b) \in \TT \times B$ for which $\phi_b$ fails to be transverse at $t$ is again
		analytic so its image $Z \subset B$ is a subanalytic subset of $B$.  It has positive codimension, 
		for if $Z$ contains an open set in $B$ then this contradicts
		the assumption that $\Phi$ is transverse to $S$.  
	\end{proof}
 \section{Heuristics for discretization of magnetic Schrödinger operators}\label{sec-magnetic-Schrodinger}
The definition of discrete magnetic operators can be found in \cite{Lieb, CdeVMagnetic} for example, however, we will give here a heuristic explanation for why this is the right discretization for magnetic Schrödinger operators. For simplicity, we consider domains in $\RR^{3}$ so that magnetism can be described using  vector fields: a magnetic field $B$ and magnetic potential $A$ such that $B=\nabla\times A$. 
(The modern approach would consider $A$ and $B$ as a 1-form and 2-forms). \\
The quadratic form of a Schrödinger operator $H=\Delta+V$ on a domain $\Omega\subset\RR^{n}$ is 
\[\langle f,Hf\rangle=\int_{\Omega}\sum_{j=1}^{n}\left(\frac{\partial f(x)}{\partial x_{j}}\right)^{2}+V(x)f(x)^{2}dx,\]
for the relevant class of functions $f$ on $\Omega$.
If we approximate $\frac{\partial f(x)}{\partial x_{j}}$ with $\frac{f(x+\epsilon e_{j})-f(x)}{\epsilon}$, the quadratic form can be written as in \eqref{quadratic form}
\[\sum_{x,y\in\Lambda_{\epsilon}}h_{xy}(f(y)-f(x))^{2}+V(x)f(x)^{2}dx,\qquad h_{xy}=\begin{cases}\frac{1}{\epsilon^{2}} & x\sim y\\
0 & \text{otherwise}\end{cases},\]
where $\Lambda_{\epsilon}\subset \Omega$ is a grid of side length $\epsilon$. Introducing a magnetic field $B$, the operator $H$ is changed to a magnetic Schrödinger operator $H_{A}$ by the rule $\frac{\partial f(x)}{\partial x_{j}}\mapsto \frac{\partial f(x)}{\partial x_{j}}+iA_{j}(x)f(x)$, where $A=(A_{1},A_{2},A_{3}) $ is the {\em magnetic potential} defined (uniquely up to gauge transformations $A\sim A'=A+\nabla g$) by the relation $\nabla\times A=B$. Notice that
\[\frac{\partial f(x)}{\partial x_{j}}+iA_{j}(x)f(x)=\lim_{t\to 0} \frac{(e^{i\int_{x}^{x+te_{j}}A(s)ds}f(x+te_{j}))-f(x) }{t},\]
and the $\epsilon$ discretization of the quadratic form can be written as in \eqref{sesquilinear form} 

\begin{align*}
    \int_{\Omega}&\sum_{j=1}^{3}\left|\frac{\partial f(x)}{\partial x_{j}}+iA_{j}(x)f(x)\right|^{2}+V(x)|f(x)|^{2}dx\\
&\approx\sum_{x,y\in\Lambda_{\epsilon}}h_{xy}\left|f(y)-e^{i\alpha_{xy}}f(x)\right|^{2}+V(x)|f(x)|^{2}dx,\end{align*}
using $\alpha_{xy}=\int_{x}^{x+\epsilon e_{j}}A(s)ds$ when $y=x+\epsilon e_{j}$ and extending it antisymmetrically.

\end{document}